\newcommand{\Hn}{\mathcal H^{(n)}}
\newcommand{\K}{\mathbf K^{(n)}_\mathfrak{T}}
\newcommand{\E}{\mathbf E^{(n)}_\mathfrak{T}}
\newcommand{\KBF}{^\alpha\mathbf{KBF}^{(n)}_\mathfrak{T}}
\newcommand{\KPF}{^\gamma\mathbf{KPF}^{(n)}_\mathfrak{T}}
\newcommand{\KBPF}{^\beta\mathbf{KBPF}^{(n)}_\mathfrak{T}}
\newcommand{\KD}{^p\mathbf{KD}^{(n)}_\mathfrak{T}}
\newcommand{\KAD}{^{p,\lambda}\mathbf{KAD}^{(n)}_\mathfrak{T}}
\newcommand{\parg}[1]{\left\{#1\right\}}  
\newcommand{\ket}[1]{\vert {#1} \rangle}  
\newcommand{\nl}{\par\noindent}   
\newcommand{\Prob}{{\tt p}}    
\newcommand{\Not}{{\tt NOT}}
\newcommand{\QXor}{{\tt XOR}}
\newcommand{\C}{\mathbb{C}}
\newcommand{\Id}{{\rm I}}
\begin{document}
	
\title{Quantum Approach to Epistemic Semantics
}
\titlerunning{Quantum approach to epistemic semantics}        

\author{Giuseppe Sergioli \and
	Roberto Leporini
}

\authorrunning{G.Sergioli, R.Leporini} 

\institute{Giuseppe Sergioli \at
	Universit\`a di Cagliari\\
	via Is Mirrionis 1, I-09123 Cagliari, Italy.\\
	\email{giuseppe.sergioli@gmail.com}
	\and
	Roberto Leporini \at
	Dipartimento di Ingegneria gestionale, dell'informazione e della produzione\\
	Universit\`a di Bergamo\\
	viale Marconi 5, I-24044 Dalmine (BG), Italy.\\
	\email{roberto.leporini@unibg.it}
}

\date{Received: date / Accepted: date}

\maketitle

\begin{abstract}
	Quantum information has suggested new forms of quantum logic, called \emph{quantum computational logics},
	where \emph{meanings} of sentences are represented by pieces of quantum information (generally, density  operators of some Hilbert spaces),
	which can be stored and transmitted by means of quantum particles.
	This approach can be applied to a semantic characterization of \emph{epistemic logical operations},
	which may occur in sentences like ``At time $\mathfrak t'$ Bob knows that at time $\mathfrak t$ Alice knows that the spin-value is up''.
	Each epistemic agent (say, \emph{Alice}, \emph{Bob},...) has a characteristic \emph{truth-perspective},
	corresponding to a particular orthonormal basis of the Hilbert space $\C^2$.
	From a physical point of view, a truth-perspective can be associated to an apparatus that allows one to measure a given observable.
	An important feature that characterizes the knowledge of any agent is the amount of information that is accessible to him/her
	(technically, a special set of density operators, which also represents the \emph{internal memory} of the agent in question).
	One can prove that interesting epistemic operations are special examples of quantum channels, which generally are not unitary.
	The \emph{act of knowing} may involve some intrinsic irreversibility due to possible measurement-procedures or
	to a loss of information about the environment.
	We also illustrate some relativistic-like effects that arise in the behavior of epistemic agents.
	
	\keywords{Quantum logic \and quantum operations \and epistemic
		structures} \PACS{PACS 03.67.Lx \and PACS 03.67.-a \and 02.10.Ab}
\end{abstract}

\section{Introduction}

Quantum information has recently suggested new problems and new investigations in logic.
An interesting example concerns an abstract theory of \emph{epistemic logical operations}
(like ``to know'') developed in a Hilbert space environment \cite{BDCGLS,DBGS}. Most standard approaches to epistemic logics have proposed a quite
strong characterization of epistemic operations. In such
frameworks knowledge is generally modelled  as a \emph{potential}
rather than an \emph{actual} knowledge. Accordingly, a sentence
like ``Alice knows that the spin-value in the $x$-direction is
up'' turns out to have the meaning  ``Alice \emph{could} know''
rather than ``Alice \emph{actually} knows''. A consequence of
these theories is the unrealistic phenomenon of \emph{logical
	omniscience} according to which whenever Alice knows a given
sentence, then Alice knows \emph{all} the logical consequences
thereof. Hence, for instance, knowing the axioms of Euclidean
geometry  should imply knowing all the theorems of the theory.

A more realistic logical theory of knowledge can be developed  in
the framework of \emph{quantum computational logics}
\cite{DGL05}, which are new forms of quantum logic suggested by the
theory of \emph{quantum logical gates} in quantum computation.
The basic ideas that underly the semantic characterization of
these logics can be sketched as follows. Pieces of quantum
information (mathematically represented by density operators of
convenient Hilbert spaces) are regarded as possible \emph{meanings}
for the sentences of a given formal language. At the
same time, the basic logical connectives of this language are
interpreted as particular quantum logical gates: unitary quantum
operations that transform density operators in a reversible way.
Accordingly, any sentence can be imagined as a synthetic logical
description of a quantum circuit.

Is it possible and interesting to describe \emph{logical epistemic
	operations} (say, ``At time $\mathfrak t$ Alice knows'') as a
special kind of quantum operation? We give a positive answer to this question. The intuitive idea is the
following: whenever $\rho$ represents a piece of quantum
information (for instance the qubit $\ket{1}$), the operation
$\mathbf K_{\mathfrak a_\mathfrak t}$ (say, ``At time $\mathfrak
t$ Alice knows'') transforms $\rho$ into another piece of quantum
information $\mathbf K_{\mathfrak a_\mathfrak t}\rho$, which lives
in the same space of $\rho$ and asserts that ``At time $\mathfrak
t$ Alice knows $\rho$''. Of course, generally, $\mathbf
K_{\mathfrak a_\mathfrak t}\rho$ and $\rho$ will be different
density operators. On this basis, one can study the behavior of
nested epistemic operations, like ``At time $\mathfrak t'$ Bob
knows that at time $\mathfrak t$ Alice knows $\rho$'' ($\mathbf
K_{\mathfrak b_{\mathfrak t'}} \mathbf K_{\mathfrak a_\mathfrak
	t}\rho$).

Like any quantum information, also epistemic pieces of information
(say,  $\mathbf K_{\mathfrak a_\mathfrak t}\rho$) may be \emph{true} or \emph{false} or \emph{indeterminate},
where truth-values are defined in terms of a natural notion of quantum probability.
Conventionally, one can assume that the two elements $\ket{1}$ and $\ket{0}$ of the canonical orthonormal basis
of the Hilbert space $\C^2$ represent in this framework the truth-values \emph{Truth} and \emph{Falsity}.
Accordingly (by application of the Born-rule), the probability of being true for a generic qubit
$\ket{\psi}= a_0\ket{0} + a_1\ket{1}$ will be the number ${\Prob}(\ket{\psi})= |a_1|^2$.
The definition of Born-like probabilities can be canonically extended to all density operators,
living in any product-space $\underbrace{\C^2\otimes\ldots\otimes\C^2}_{n-times}$.

The choice of an orthonormal basis for the space $\C^2$ is, obviously, a matter of convention.
One can consider infinitely many bases that are determined by the application of a unitary
operator $\mathfrak T$ to the elements of the canonical basis.
From an intuitive point of view, we can think that the operator
$\mathfrak T$ gives rise to a change of \emph{truth-perspective}.
While the canonical truth-perspective is identified with the pair of bits $\ket{1}$  and $\ket{0}$,
in the $\mathfrak T$-truth perspective \emph{Truth} and \emph{Falsity} are identified with
the two qubits $\mathfrak T \ket{1}$ and $\mathfrak T \ket{0}$, respectively.
In this framework, it is not strange to guess that
different epistemic agents may have different truth-perspectives,
corresponding to different ideas of \emph{Truth} and \emph{Falsity}.

From a physical point of view, each truth-perspective can be naturally regarded as associated to
a physical apparatus that allows one to measure a given observable.
As an example, consider a source  emitting a pair of photons correlated in polarization in
such a way  that both photons have the same polarization
(say, the horizontal polarization $\ket H$ or the vertical polarization $\ket V$).
Suppose that the two photons are in the entangled state $$\ket\psi=a\ket {H,H}+b\ket {V,V}$$
(with $|a|^2+|b|^2=1$).
The orthonormal basis $B=\{\ket H, \ket V\}$ represents here a particular truth-perspective.
Let us refer to an observer equipped with a polarizer that detects $45^o$ and $135^o$ polarized photons.
The same state $\ket\psi$ will be represented by the observer from a different truth-perspective,
corresponding to the orthonormal basis $B^\prime=\{\ket{45},\ket{135}\}$.
In fact, by applying the transformation
$$\ket H =\frac{1}{\sqrt2}(\ket{45} -\ket{135}); \;\ket V =\frac{1}{\sqrt2}(\ket{45}+\ket{135}),$$
the observer will describe the state $\ket \psi$ as
$$\ket \psi=\frac{a+b}{2}(\ket{45,45}+\ket{135,135})-\frac{a-b}{2}(\ket{45,135}+\ket{135,45}).$$
Notice that a truth-perspective change gives rise to a different
description of one and the same physical state $\ket \psi$.

An important feature that characterizes the knowledge of any agent
is represented by the amount of information that is accessible to
him/her. Technically, the \emph{epistemic domain} of an agent can
be identified with a special set of density operators. From an
intuitive point of view, this set can be regarded as the set of
pieces of information that  our agent is able to \emph{understand} and to \emph{memorize}. And the limits of
epistemic domains can be used to avoid the unpleasant phenomenon of
logical omniscience.  One can prove that epistemic operations are
not generally unitary. From an intuitive point of view, the
\emph{act of knowing} seems to involve some intrinsic
irreversibility due to possible measurement-procedures or to a
loss of information about the environment.

In Section 4 we will see how quantum channels are deeply connected with epistemic operations:
special examples of irreversible quantum operations that can be used to reach the information
stored by density operators.
Finally 
we will also illustrate some relativistic-like effects that arise in the
behavior of epistemic agents.

An abstract study of epistemic operations in a Hilbert-space
environment may have a double interest:
\begin{enumerate}
	\item from a logical point of view such analysis shows how
	``thinking in a quantum-theoretic way'' can contribute to
	overcome some crucial difficulties of standard epistemic
	logics.
	\item From a physical point of view this analysis stimulates
	further investigations about possible correlations between
	the irreversibility of quantum operations and the kind of
	``jumps'' that seem to characterize acts of knowledge,
	both in the case of human and of artificial intelligence.
\end{enumerate}

\section{Truth-perspectives and quantum logical gates}
The general mathematical environment is the $n$-fold tensor product of the Hilbert space $\C^2$:
$$\Hn:=\underbrace{\C^2\otimes\ldots\otimes\C^2}_{n-times},$$
where all pieces of quantum information live.
The elements $\ket{1} = (0,1)$ and $\ket{0} = (1,0)$ of the canonical orthonormal basis $B^{(1)}$
of $\C^2$ represent, in this framework, the two classical bits,
which can be also regarded as the canonical truth-values \emph{Truth} and \emph{Falsity}, respectively.
The canonical basis of $\Hn$ is the set
$$B^{(n)} =\parg{\ket{x_1}\otimes \ldots \otimes \ket{x_n}: \ket{x_1},\ldots, \ket{x_n} \in B^{(1)}}.$$
As usual, we will briefly write $\ket{x_1, \ldots, x_n}$ instead of $\ket{x_1} \otimes \ldots \otimes \ket{x_n}$.
By definition, a \emph{quregister} is a unit vector of $\Hn$.
Quregisters thus correspond to pure states, namely to maximal pieces of information about the quantum
systems that are supposed to store a given amount of quantum information.
We shall also make reference to \emph{mixtures} of quregisters,
represented by density operators $\rho$ of $\Hn$.
Of course, quregisters  correspond to special cases of density operators.
We will denote by $\mathfrak D(\Hn)$ the set of all density operators of $\Hn$,
while $\mathfrak D= \bigcup_n\parg{\mathfrak D(\Hn)}$
will represent the set of all possible pieces of quantum information.

As observed in the Introduction, from an intuitive point of view,
a basis-change in $\C^2$ can be regarded as a change of
\emph{truth-perspective}. While in the classical case, the
truth-values \emph{Truth} and \emph{Falsity} are identified with
the two classical bits $\ket{1}$ and $\ket{0}$, assuming a
different basis corresponds to a different idea of \emph{Truth}
and \emph{Falsity}. Since any basis-change in $\C^2$ is determined
by a unitary operator, we can identify a \emph{truth-perspective}
with a unitary operator $\frak T$ of $\C^2$. We will write:
$$ \ket{1_{\frak T}}= \frak T \ket{1};\; \ket{0_{\frak T}}=\frak T\ket{0},$$ and
we will assume that $ \ket{1_{\frak T}}$ and $\ket{0_{\frak T}}$ represent, respectively,
the truth-values \emph{Truth}  and \emph{Falsity} of the truth-perspective $\frak T$.
The \emph{canonical truth-perspective} is, of course,
determined by the identity operator ${\tt I}$ of $\C^2$.
We will indicate by $B^{(1)}_\frak T$ the orthonormal basis determined by $\frak T$;
while $B^{(1)}_{\tt I}$ will represent the canonical basis.

Any unitary operator $\frak T$ of $\mathcal H^{(1)}$ can be naturally extended to a unitary operator
$\frak T^{(n)}$ of $\Hn$ (for any $n \geq 1$):
$$\frak T^{(n)}\ket{x_1,\ldots,x_n}= \frak T\ket{x_1}\otimes \ldots \otimes \frak T\ket{x_n}.$$

Accordingly, any choice of a unitary operator $\frak T$ of $\mathcal H^{(1)}$
determines an orthonormal basis $B^{(n)}_\frak T$ for $\Hn$ such that:
$$B^{(n)}_\frak T = \parg{\frak T^{(n)}\ket{x_1,\ldots,x_n}:\ket{x_1,\ldots,x_n} \in B_{\tt I}^{(n)}}.$$
Instead of $\frak T^{(n)}\ket{x_1,\ldots,x_n}$ we will also write
$\ket{x_{1_{\frak T}},\ldots,x_{n_{\frak T}}}$.

The elements of $B^{(1)}_\frak T$ will be called the $\frak T$-{\it bits} of $\mathcal H^{(1)}$;
while the elements of $B^{(n)}_\frak T$ will represent the $\frak T$-{\it registers} of $\Hn$.

On this ground the notions of \emph{truth}, \emph{falsity} and \emph{probability} with respect to any
truth-perspective $\mathfrak T $ can be defined in a natural way.

\begin{definition} \emph{($\frak T$-true and $\frak T$-false registers)}\label{def:true}\nl
	\begin{itemize}
		\item $\ket{x_{1_{\frak T}},\ldots,x_{n_{\frak T}}}$ is a \emph{$\frak T$-true register} iff $\ket{x_{n_{\frak T}}} =\ket{1_{\frak T}};$
		\item $\ket{x_{1_{\frak T}},\ldots,x_{n_{\frak T}}}$ is a \emph{$\frak T$-false register} iff $\ket{x_{n_{\frak T}}} =\ket{0_{\frak T}}.$
	\end{itemize}
\end{definition}

In other words, the \emph{$\frak T$-truth-value} of a $\frak T$-register
(which corresponds to a sequence of $\frak T$-bits) is determined by its last element.
\footnote{As we will see, the application of a classical reversible gate to a register $\ket{x_1,\ldots,x_n}$
	transforms the (canonical) bit $\ket{x_n}$ into the target-bit $\ket{x_n^\prime}$, which behaves as the final truth-value.
	This justifies our choice in Definition \ref{def:true}.}

\begin{definition} \emph{($\mathfrak T$-truth and $\mathfrak T$-falsity)}\nl
	\begin{itemize}
		\item The $\mathfrak T$-\emph{truth} of $\Hn$ is the projection operator $^\mathfrak TP_1^{(n)}$
		that projects over the closed subspace spanned by the set of all $\mathfrak T$- true registers;
		\item the $\mathfrak T$-\emph{falsity} of $\Hn$ is the projection operator $^\mathfrak TP_0^{(n)}$
		that projects over the closed subspace spanned by the set of
		all $\mathfrak T$- false registers.
	\end{itemize}
\end{definition}

In this way, truth and falsity are dealt with as mathematical
representatives of possible physical properties. Accordingly, by
applying the Born-rule, one can naturally define the
probability-value of any density operator with respect to the
truth-perspective $\mathfrak T$.

\begin{definition} \emph{(}$\mathfrak T$-\emph{Probability)}\nl
	For any $\rho \in \mathfrak D(\Hn)$,
	$${\Prob}_\mathfrak T(\rho) := {\tt Tr}(^\mathfrak TP_1^{(n)} \rho),$$
	where ${\tt Tr}$ is the trace-functional.
\end{definition}
We interpret ${\Prob}_\mathfrak T(\rho)$ as the probability that the information $\rho$ satisfies the $\mathfrak T$-Truth.

In the particular case of qubits, we will obviously obtain:
$${\Prob}_\mathfrak T(a_0\ket{0_\mathfrak T} + a_1\ket{1_\mathfrak T}) =|a_1|^2.$$
Two truth-perspectives $\mathfrak {T}_1$ and $\mathfrak {T}_2$ are called \emph{probabilistically equivalent} iff for any density operator $\rho$, $\tt p_{\mathfrak{T}_1}(\rho)=\tt p_{\mathfrak{T}_2}(\rho)$.

For any choice of a truth-perspective  $\mathfrak T$, the set $\mathfrak D$ of all density operators can be pre-ordered by a
relation that is defined in  terms of the probability-function ${\Prob}_\mathfrak T$.

\begin{definition}\emph{(Preorder)} \label{de:preordine}\nl
	$\rho \preceq_\mathfrak T \sigma$ iff ${\Prob}_\mathfrak T(\rho) \le {\Prob}_\mathfrak T(\sigma)$.
\end{definition}

This preorder relation plays an important role in the semantics of
quantum computational logics. For, the {\em logical
	consequence}-relation between sentences is defined in terms of
$\preceq_\mathfrak T$ (see Section 4).

As is well known, quantum information is processed by \emph{quantum logical gates} (briefly, \emph{gates}):
unitary operators that transform quregisters into quregisters in a reversible way.

In this article we will consider some well known quantum gates \cite{DCGLS14}: the \emph{negation} $\tt NOT^{(n)}$, the \emph{Toffoli gate} $\tt T^{(n,m,p)}$, the \emph{controlled-not gate} 
$\tt XOR^{(n,m)}$, the \emph{Hadamard-gate} $\sqrt{\tt I}^{(n)}$ and the \emph{square root of the negation} $\sqrt{\tt NOT}^{(n)}$, that play a special role
both from the computational and from the logical point of view.

All gates can be naturally transposed from the canonical truth-perspective to any truth-perspective $\mathfrak T$.
Let $G^{(n)}$ be any gate defined with respect to the canonical truth-perspective.
The \emph{twin-gate} $G^{(n)}_{\mathfrak T}$, defined with respect to the truth-perspective $\mathfrak T$, is determined as follows:
$$G^{(n)}_{\mathfrak T}:=\mathfrak T^{(n)} G^{(n)} \mathfrak T^{{(n)\dagger}},$$
where $\mathfrak T^\dagger$ is the adjoint of $\mathfrak T$.

All $\mathfrak T$-gates, defined on $\Hn$, can be canonically extended to
the set of all density operators of $\Hn$.
Let $G_\mathfrak T$ be any gate defined on $\Hn$.
The corresponding \emph{unitary quantum operation} $^\mathfrak DG_\mathfrak T$ is defined as follows
for any $\rho \in \mathfrak D(\Hn)$:
$$^{\mathfrak D}G_\mathfrak T\rho=G_\mathfrak T\,\rho\, G_\mathfrak T^\dagger.$$

It is interesting to consider a convenient notion of \emph{distance} between truth-perspectives.
As is well known, different definitions of distance between vectors can be found in the literature.
For our aims it is convenient to adopt the Fubini-Study definition of distance between two qubits.
\begin{definition} (\emph{The Fubini-Study distance})\label{de:fubini}\nl
	Let $\ket{\psi}$ and $\ket{\varphi}$ be two qubits.
	$$d(\ket{\psi}, \ket{\varphi})=\frac{2}{\pi}\arccos|\langle \psi|\varphi\rangle|.$$
\end{definition}

This notion of distance satisfies the following conditions:
\begin{enumerate}
	\item $d(\ket{\psi}, \ket{\varphi})$ is a metric distance;
	\item $\ket{\psi} \perp  \ket{\varphi} \Rightarrow d(\ket{\psi}, \ket{\varphi})=1 $;
	\item $d(\ket{1}, \ket{1_{Bell}}) = \frac {1}{2}$, where $\ket{1}$  is the canonical truth,
	while $\ket{1_{Bell}}=\sqrt{\tt I}^{(1)}\ket{1}=\left(\frac{1}{\sqrt{2}},-\frac{1}{\sqrt{2}}\right)$
	represents the \emph{Bell-truth} (which corresponds to a maximal uncertainty with respect to the canonical truth).
\end{enumerate}

On this ground, one can naturally define the \emph{epistemic distance} between two truth-perspectives.

\begin{definition} (\emph{Epistemic distance})\label{de:epdist}\nl
	Let $\mathfrak T_1$ and $\mathfrak T_2$  be two truth-perspectives.
	$$d^{Ep}(\mathfrak T_1, \mathfrak T_2)=d(\ket{1_{\mathfrak T_1}}, \ket{1_{\mathfrak T_2}}).$$
\end{definition}

In other words, the epistemic distance between the truth-perspectives $\mathfrak T_1$ and $\mathfrak T_2$
is identified with the distance between the two qubits that represent the truth-value \emph{Truth} in
$\mathfrak T_1$ and in $\mathfrak T_2$, respectively.

\section{Epistemic operations and epistemic structures}
We will now introduce the concepts of \emph{(logical) epistemic operation} and of \emph{epistemic structure}.

\begin{definition} \emph{((Logical) epistemic operation and strong epistemic operation)} \label{de:know}\nl
	A \emph{(logical) epistemic operation} of the space $\Hn$ with respect to the truth-perspective $\frak T$ is a map
	$$\E: \mathcal B(\Hn) \mapsto \mathcal B(\Hn),$$ where
	$\mathcal B(\Hn)$ is the set of all bounded operators of $\Hn$.
	The following conditions are required:
	\begin{enumerate}
		\item $\E$ is associated with an \emph{epistemic domain}
		$EpD(\E)$, which is a subset of $\frak D(\Hn)$;
		\item for any $\rho \in \mathfrak D(\Hn)$, $\E\rho \in \mathfrak D(\mathcal H^{(n)})$;
		\item $\forall \rho \in \frak D(\Hn): \rho\notin EpD(\E)\;\Rightarrow
		\;\E\rho = \overline{\rho_0}$
		(where $\overline{\rho_0}$ is a fixed density operator of $\mathfrak D(\Hn)$).
	\end{enumerate}
	An epistemic operation $\E$ is called \emph{strong} iff
	$\E \rho \preceq_\mathfrak T \rho$,
	for any $\rho \in EpD(\mathbf E_{\frak T})$
	(where $\preceq_\mathfrak T$ is the preorder relation defined by Def.\ref{de:preordine}).
	
\end{definition}
As expected, an intuitive interpretation of $\E\rho$ is the following:
``the piece of information $\rho$ is known/believed''.
The strong epistemic operation described by $\E$ is limited by a given epistemic domain
(which is intended to  represent  the information accessible to a given agent,
or also his/her \emph{memory}, relatively to the space $\mathcal H^{(n)}$).
Whenever a piece of information $\rho$ does not belong to the epistemic domain of
$\E$, then $\E\rho$ collapses into a fixed element $\overline{\rho_0}$
(which may be identified, for instance, with the maximally uncertain information $\frac{1}{2^n}{\tt I}^{(n)}$ or
with the $\mathfrak T$-Falsity $\frac{1}{2^{n-1}}\,^\mathfrak TP_0^{(n)}$).
At the same time, whenever $\rho$  belongs to the epistemic domain of $\E$,
it seems reasonable to assume that the probability-values of $\rho$ and
$\E\rho$ are correlated:
the probability of the quantum information asserting that ``$\rho$ is known/believed'' should
always be less than or equal to the probability of $\rho$.
Hence, in particular, we have:
$$\Prob_{\frak T}(\E\rho) =1 \;\Rightarrow\;\Prob_{\frak T}(\rho)=1.$$
But generally, not the other way around! In other words, pieces of
quantum information that are certainly known are certainly true
(with respect to the truth-perspective in question).

A strong epistemic operation $\E$  is called \emph{non-trivial}
iff for at least one density operator $\rho \in EpD(\E)$, 
${\Prob}_{\frak T}(\E\rho) < {\Prob}_{\frak T}(\rho)$.
Notice that strong epistemic operations do not generally preserve
pure states \cite{BDCGLS,BDGLS2,BDGLS3}.

One can prove that non-trivial epistemic operations cannot be
represented by unitary quantum operations, being generally
irreversible \cite{BDCGLS}. Their behavior is, in a sense, similar
to the behavior of measurement-operations.

At the same time, some interesting epistemic operations can be represented by
the more general notion of \emph{quantum channel} defined below.
\footnote{Quantum channels represent particular cases of \emph{quantum operations}.
	The concept of quantum operation is a quite general notion that permits us to represent at the same time
	\emph{symmetry transformations of quantum states}, \emph{effects} and \emph{measurements}.
	In particular, it has been shown that for \emph{open systems},
	interacting with an environment, the Schr\"odinger-equation should be generalized to a superoperator-equation,
	describing how an initial pure state evolves into a mixed  state, transformation that has the form of a quantum operation.
	See, for instance,\cite{CDP},\cite{Hong}.}

\begin{definition} \emph{(Quantum channel)}\label{quantochann}
	\footnote{This definition is based on the so called \emph{Kraus
			first representation theorem}. See \cite{K}.} \nl A \emph{quantum
		channel} on $\Hn$ is a linear map $\mathcal E^{(n)}$ from
	$\mathcal B(\Hn)$ to $\mathcal B(\Hn)$ such that for some set $I$
	of indices there exists a set $\parg{E_{i}}_{i \in I}$ of elements
	of $\mathcal B(\Hn)$ satisfying the following conditions:
	\begin{enumerate}
		\item $\sum_i E_i^\dagger E_i = {\tt I}^{(n)}$;
		\item $\forall A \in  \mathcal B(\Hn): \mathcal E^{(n)}(A) = \sum_i E_iAE_i^\dagger$.
	\end{enumerate}
\end{definition}

A set $\parg{E_{i}}_{i \in I}$ such that $\sum_i E_i^\dagger E_i =
{\tt I}^{(n)}$ is usually called a \emph{system of Kraus
	operators}. One can prove that quantum channels are
trace-preserving, and hence transform density operators into
density operators.

Of course, unitary quantum operations $^\mathfrak D G^{(n)}$ are special cases of quantum channels,
for which $\parg{E_{i}}_{i \in I}= \parg{G^{(n)}}$.
As expected, quantum channels can be defined with respect to
any truth-perspective $\mathfrak T$.

We define now some possible properties of epistemic operations
that have a significant logical interest. One is dealing with
strong conditions that are generally violated in the real use of
our intuitive notion of  knowledge.

\begin{definition}\emph{(Introspection, consistency and monotonicity)}
	\label{de:intr}\nl
	A strong epistemic operation $\E$ of $\Hn$
	(with respect to the truth-perspective $\frak T$) is called
	\begin{enumerate}
		\item \emph{positively introspective} iff for any $\rho\in\mathfrak D (\Hn):$
		$${\Prob}_{\frak T}(\E\rho) \le {\Prob}_{\frak T}
		(\E\E\rho).$$ In other words, whenever we know/believe we know/believe
		that we know/believe.
		\item \emph{Negatively introspective} iff for any
		$\rho\in\mathfrak D(\Hn):$
		$${\Prob}_{\frak T}(\,^{\frak D}\Not_\frak T\E\rho)
		\le {\Prob}_{\frak T}(\E\,^{\frak D}\Not_\frak T\E\rho).$$
		In other words, whenever we do not know/believe we know/believe that we do not
		know/believe.
		
		\item \emph{Probabilistically consistent} iff for any
		$\rho\in\mathfrak D(\Hn):$
		$${\Prob}_{\frak T}(\E\rho) \le {\Prob}_{\frak T}(\,^{\frak D}\Not_\frak T\E
		\,^{\frak D}\Not_\frak T\rho).$$  In other words,
		whenever we know/believe an information we do not know/believe its negation.
		\item \emph{monotonic} iff for any $\rho, \sigma \in\mathfrak
		D(\Hn):$
		$${\Prob}_{\frak T}(\rho) \le {\Prob}_{\frak
			T}(\sigma)\,\,\Rightarrow \,\,
		{\Prob}_{\frak T}(\E\rho) \le {\Prob}_{\frak T}(\E\sigma).$$
	\end{enumerate}
\end{definition}

Using the concepts defined above, we can  finally introduce the
notion of \emph{epistemic quantum computational structure}. From
an intuitive point of view, an epistemic quantum computational
structure can be described as a system consisting of a set of
\emph{epistemic agents} evolving in time (where time is dealt with
as a finite sequence of instants $(\mathfrak t_1, \ldots,
\mathfrak t_n)$). Each agent $\mathfrak a$ is characterized by a
truth-perspective $\mathfrak T_\mathfrak a$, which (for the sake
of simplicity) is supposed to be constant in time. At any time
$\mathfrak t$ and for any Hilbert space $\Hn$, each agent is
associated to a strong epistemic operation $\mathbf E^{(n)}_{\mathfrak
	{T_{a}},\mathfrak {a_t}}$, whose epistemic domain represents the
amount of information that our agent is able to \emph{understand}
and to \emph{memorize} at that particular time (relatively to the
space $\Hn$). When $\rho$ belongs to the epistemic domain of
$\mathbf E^{(n)}_{\frak {T_{a}},\mathfrak {a_t}}$, then the number
${\Prob}_{\mathfrak T_{\mathfrak a}}(\mathbf E^{(n)}_{\mathfrak
	{T_{a}},\mathfrak {a_t}}\rho)$ represents the probability that
agent $\mathfrak a$ at time $\mathfrak t$ knows/believes the quantum
information $\rho$.

\begin{definition}\emph{(Epistemic quantum computational structure)}\nl
	An \emph{epistemic quantum computational structure} is a system
	$$\mathcal S=(T,\,Ag,\,TrPersp,\,\mathbf{Inf},\,\mathbf U,\,\mathbf B,\,\mathbf K),$$
	where:
	\begin{enumerate}
		\item $T$ is a time-sequence $(\mathfrak t_1,\ldots,\mathfrak t_n)$;
		\item $Ag$ is a set of epistemic agents where each agent $\mathfrak a$ is represented as a function of the time $\frak t$ in $T$.
		We will write $\mathfrak {a_t}$ instead of $\mathfrak{a(t)}$;
		\item $TrPersp$ is a map that assigns to any agent $\mathfrak a$ a truth-perspective $\frak T_\mathfrak a$
		(the truth-perspective of $\mathfrak a$);
		\item $\mathbf{Inf}$ is a map that assigns to any $\mathfrak{a_t}$ and to any
		$n \geq 1$ a map, called (logical) \emph{information operation}
		$$\mathbf {Inf}^{(n)}_{\frak {T_{a}},\mathfrak {a_t}}:\mathcal B(\Hn) \mapsto \mathcal B(\mathcal H^{(n)}),$$
		which is an epistemic operation with respect to the truth-perspective $\frak {T_{a}}$ (the truth-perspective of agent $\frak a$).
		\item $\mathbf U$ is a map that assigns to any $\mathfrak{a_t}$ and to any
		$n \geq 1$ a map, called (logical) \emph{understandig operation}
		$$\mathbf U^{(n)}_{\frak {T_{a}},\mathfrak {a_t}}:\mathcal B(\Hn) \mapsto \mathcal B(\mathcal H^{(n)}),$$
		which is an epistemic operation with respect to the truth-perspective $\frak {T_{a}}$.
		\item $\mathbf B$ is a map that assigns to any $\mathfrak{a_t}$ and to any
		$n \geq 1$ a map, called (logical) \emph{belief operation}
		$$\mathbf B^{(n)}_{\frak {T_{a}},\mathfrak {a_t}}:\mathcal B(\Hn) \mapsto \mathcal B(\mathcal H^{(n)}),$$
		which is a strong epistemic operation with respect to the truth-perspective $\frak {T_{a}}$.
		\item $\mathbf K$ is a map that assigns to any $\mathfrak{a_t}$ and to any
		$n \geq 1$ a map, called (logical) \emph{knowledge operation}
		$$\mathbf K^{(n)}_{\frak {T_{a}},\mathfrak {a_t}}:\mathcal B(\Hn) \mapsto \mathcal B(\mathcal H^{(n)}),$$
		which is a strong epistemic operation with respect to the truth-perspective $\frak {T_{a}}$.
		The following conditions are required:
		\subitem{(i)} $\mathbf K^{(n)}_{\frak {T_{a}},\mathfrak {a_t}} \rho \preceq_\mathfrak T \mathbf B^{(n)}_{\frak {T_{a}},\mathfrak {a_t}}\rho$, for any $\rho \in EpD(\mathbf K^{(n)}_{\frak {T_{a}},\mathfrak {a_t}});$
		\subitem{(ii)} $\mathbf B^{(n)}_{\frak {T_{a}},\mathfrak {a_t}}\rho \preceq_\mathfrak T \mathbf U^{(n)}_{\frak {T_{a}},\mathfrak {a_t}}\rho$, for any $\rho \in EpD(\mathbf B^{(n)}_{\frak {T_{a}},\mathfrak {a_t}})$
		(where $\preceq_\mathfrak T$ is the preorder relation defined by Def.\ref{de:preordine});
		\subitem{(iii)} $EpD(\mathbf K^{(n)}_{\frak {T_{a}},\mathfrak {a_t}})\subseteq EpD(\mathbf B^{(n)}_{\frak {T_{a}},\mathfrak {a_t}})\subseteq EpD(\mathbf U^{(n)}_{\frak {T_{a}},\mathfrak {a_t}})\subseteq EpD(\mathbf{Inf}^{(n)}_{\frak {T_{a}},\mathfrak {a_t}})$.
	\end{enumerate}
\end{definition}
From an epistemic point of view, an agent can be certain to have an uncertain information and vice versa.

\section{Quantum noise channels as epistemic operations}

We will now illustrate some examples of strong epistemic operations that may be interesting
from a physical point of view.
One is dealing with special cases of \emph{quantum noise channels}, which can be, generally,
obtained from some unitary operators, tracing out  the ancillary qubits that describe the environment.

Let $\alpha, \beta, \gamma$ be complex numbers such that $|\alpha|^2 + |\beta|^2 + |\gamma|^2 \le 1$.
Consider the following system of Kraus operators:
$$\begin{array}{l}
E_0=\sqrt{1-|\alpha|^2-|\beta|^2-|\gamma|^2}\,{\tt I}\\
E_1=|\alpha|\sigma_x\\
E_2=|\beta|\sigma_y\\
E_3=|\gamma|\sigma_z
\end{array}$$
(where $\sigma_x$, $\sigma_y$, $\sigma_z$ are the three Pauli matrices).\\
Define $^{\alpha,\beta,\gamma}\mathcal E^{(1)}$ as follows for any $\rho \in \mathfrak D(\C^2)$:
$$\, ^{\alpha,\beta,\gamma}\mathcal E^{(1)} \rho=\sum_{i=0}^3 E_i\, \rho\, E_i^{\dagger}.$$
We have:
$$^{\alpha,\beta,\gamma}\mathcal E^{(1)} \rho=(1-|\alpha|^2-|\beta|^2-|\gamma|^2)\rho+|\alpha|^2\sigma_x\rho\sigma_x+
|\beta|^2\sigma_y\rho\sigma_y+|\gamma|^2\sigma_z\rho\sigma_z$$

One can prove that for any choice of $\alpha$, $\beta$, $\gamma$
(such that $|\alpha|^2 + |\beta|^2 + |\gamma|^2 \le 1$),
the map $^{\alpha,\beta,\gamma}\mathcal E^{(1)}$ is a quantum channel of the space $\C^2$.

Let us refer to  the Bloch-sphere  corresponding to $\mathfrak D(\C^2)$.
Any map $^{\alpha,\beta,\gamma}\mathcal E^{(1)}$ induces the following vector-transformation
(the sphere is deformed into an ellipsoid centered at the origin):
$$\left(\begin{array}{l}
x\\
y\\
z
\end{array}\right)
\mapsto
\left(\begin{array}{l}
(1-2 |\beta|^2-2 |\gamma|^2)\; x\\
(1-2 |\alpha|^2-2 |\gamma|^2)\; y\\
(1-2|\alpha|^2-2 |\beta|^2)\; z
\end{array}\right)$$

For particular choices of  $\alpha$, $\beta$ and $\gamma$, one obtains some special cases of quantum channels.
\begin{itemize}
	\item For $\alpha=\beta=\gamma=0$, one obtains the identity operator.
	\item For $\beta=\gamma=0$, one obtains the \emph{bit-flip
		channel} $^\alpha\mathcal{BF}^{(1)}$ that flips the two
	canonical bits (represented as the projection operators
	$^{\tt I}P_0^{(1)}$ and  $^{\tt I}P_1^{(1)}$) with
	probability $|\alpha|^2$:
	$$\,^{\tt I}P_0^{(1)} \mapsto (1-|\alpha|^2) \,^{\tt I}P_0^{(1)}+(|\alpha|^2) \,^{\tt I}P_1^{(1)};$$
	$$\,^{\tt I}P_1^{(1)} \mapsto (1-|\alpha|^2)\,^{\tt I}P_1^{(1)}+(|\alpha|^2) \,^{\tt I}P_0^{(1)}.$$
	The sphere is mapped into an ellipsoid with $x$ as symmetry-axis 
\end{itemize}


\begin{itemize}
	\item For $\alpha=\gamma=0$, one obtains the \emph{bit-phase-flip channel} $^\beta\mathcal{BPF}^{(1)}$ that
	flips both bits and phase with probability $|\beta|^2$. The sphere is mapped into an ellipsoid with $y$ as symmetry-axis.
	\item For $\alpha=\beta=0$, one obtains the \emph{phase-flip channel} $^\gamma\mathcal{PF}^{(1)}$
	that flips the phase with probability $|\gamma|^2$.
	The sphere is mapped into an ellipsoid with $z$ as symmetry-axis.
	\item For $|\alpha|^2=|\beta|^2=|\gamma|^2=\frac{p}{4}$, one obtains the \emph{depolarizing channel} $^p\mathcal{D}^{(1)}$.
	If $p=1$, the polarization along any direction is equal to $0$.
	The sphere is contracted by a factor $1-p$ and the center of the sphere is a fixed point.
\end{itemize}

Another interesting example is the \emph{generalized amplitude damping} $\mathcal {AD}^{(1)}$
induced by the following system of Kraus-operators:
$$\begin{array}{ll}
E_0=\sqrt{\lambda}\left(\begin{array}{cc}1 & 0\\ 0 & \sqrt{1-p} \end{array}\right)
&E_2=\sqrt{1-\lambda}\left(\begin{array}{cc}\sqrt{1-p} & 0 \\ 0& 1\end{array}\right)\\
E_1=\sqrt{\lambda}\left(\begin{array}{cc}0 & \sqrt{p} \\ 0& 0\end{array}\right)
&E_3=\sqrt{1-\lambda}\left(\begin{array}{cc}0 & 0 \\ \sqrt{p}& 0\end{array}\right)
\end{array}$$
where $\lambda,p\in[0,1]$.\nl
This channel determines the following transformation on the Bloch-sphere (see Fig.~1):
$$\left(\begin{array}{l}
x\\
y\\
z
\end{array}\right)
\mapsto
\left(\begin{array}{l}
\sqrt{1-p}\; x\\
\sqrt{1-p}\; y\\
(1-p)\; z + p (2\lambda-1)
\end{array}\right)
$$
\begin{figure}[h]
	\centering
	\includegraphics[scale=0.4]{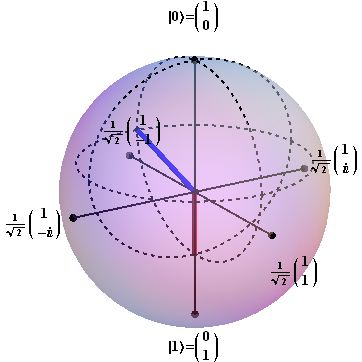}
	\caption{The amplitude damping channel}\label{AD}
\end{figure}

The channels we have considered above have been defined with respect to the canonical truth-perspective ${\tt I}$.
However, as expected, they can be naturally transposed to any truth-perspective $\mathfrak T$.
Given $\mathcal E^{(1)}$ such that $\mathcal E^{(1)}\rho = \sum_{i=0}^3 E_i\, \rho\, E_i^{\dagger}$,
the twin-channel $ \mathcal E^{(1)}_\mathfrak T$ of $\mathcal E^{(1)}$ can be defined as follows:
$$\mathcal E^{(1)}_\mathfrak T\rho:=\sum_i\mathfrak T E_i\mathfrak T^\dagger\,\rho\,\mathfrak T E_i^{\dagger}\mathfrak T^\dagger.$$

So far we have only considered quantum channels of the space $\C^2$.
At the same time, any operation $\mathcal E^{(1)}_\mathfrak T$
(defined on $\C^2$) can be canonically extended to an operation $\mathcal E^{(n)}_\mathfrak T$ defined on the space $\Hn$ (for any $n > 1$).
Consider a density operator $\rho$ of $\Hn$ and let $Red^n(\rho)$ represent the reduced state of the $n$-th subsystem of $\rho$.
We have: ${\Prob}_\mathfrak T(\rho) = {\tt Tr}(^\mathfrak T P_1^{(1)} Red^n(\rho))$.
In other words, the $\mathfrak T$-probability of $\rho$ only depends on the $\mathfrak T$-probability of the $n$-th subsystem of $\rho$.
On this basis, it is reasonable to define $\mathcal E^{(n)}_\mathfrak T$  as follows:
$$\mathcal E^{(n)}_\mathfrak T\,= \, {\tt I}^{(n-1)} \otimes\mathcal E^{(1)}_\mathfrak T.$$

Notice that, generally, a  quantum channel $\mathcal E^{(n)}$ does
not represent a strong epistemic operation. We may have, for a $\rho$
(which is supposed to belong to the epistemic domain):
$${\Prob}_{\tt I}(^\alpha\mathcal {BF}^{(1)}\rho) \nleq {\Prob}_{\tt I}(\rho),  $$
against the definition of strong epistemic operation.

At the same time, by convenient choices of the epistemic domains,
our quantum channels can be transformed into strong epistemic operations.

\begin{definition} \emph{(A bit-flip epistemic operation $\KBF$)} \label{de:bit}\nl
	Let $\alpha \neq 0$. Define $\KBF$ as follows:
	\begin{enumerate}
		\item $EpD(\KBF) \subseteq
		D = \{\rho\in \mathfrak D(\Hn)\, |\,\Prob_{\mathfrak T}(\rho)\geq \frac{1}{2}\}$.\nl
		In other words an agent (whose strong epistemic operation is $\KBF$)
		only understands pieces of information that are not ``too far from the truth''.
		\item $\rho \in EpD(\KBF) \;\Rightarrow \;
		\KBF\rho = \,\,^\alpha\mathcal {BF}^{(n)}_\mathfrak T \rho$.
	\end{enumerate}
\end{definition}

\begin{theorem} \label{th:bit}\cite{DCGLS14}\nl
	\begin{enumerate}
		\item[(i)]Any $\KBF$ is a strong epistemic operation. In particular,
		$\KBF$ is a non-trivial epistemic operation if there
		exists at least one $\rho\in EpD(\KBF)$ such that
		$\Prob_{\mathfrak T}(\rho)>\frac{1}{2}$.
		\item[(ii)] the set $D$ is the maximal set such that the corresponding
		$\KBF$ is a strong epistemic operation.
		\item[(iii)] Let $|\alpha|^2 \le \frac{1}{2}$ and let
		$EpD(\KBF) = D$. The following closure property holds: for
		any $\rho\in D$, $\KBF \rho\in D$.
	\end{enumerate}
\end{theorem}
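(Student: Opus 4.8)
The idea is to collapse all three items to one affine inequality on $[0,1]$, obtained from a single Bloch-sphere computation on one qubit in the canonical truth-perspective. First I would use the observations made just before Definition~\ref{de:bit}: for every $\rho\in\mathfrak D(\Hn)$ one has $\Prob_\mathfrak T(\rho)=\Tr({}^\mathfrak T P_1^{(1)}\,Red^n(\rho))$, and since ${}^\alpha\mathcal{BF}^{(n)}_\mathfrak T={\tt I}^{(n-1)}\otimes{}^\alpha\mathcal{BF}^{(1)}_\mathfrak T$ acts nontrivially only on the $n$-th factor, tracing out the first $n-1$ factors intertwines it with ${}^\alpha\mathcal{BF}^{(1)}_\mathfrak T$, so $Red^n({}^\alpha\mathcal{BF}^{(n)}_\mathfrak T\rho)={}^\alpha\mathcal{BF}^{(1)}_\mathfrak T(Red^n(\rho))$. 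Unfolding the twin-channel and twin-gate definitions I would rewrite ${}^\alpha\mathcal{BF}^{(1)}_\mathfrak T\sigma=\mathfrak T\bigl({}^\alpha\mathcal{BF}^{(1)}(\mathfrak T^\dagger\sigma\mathfrak T)\bigr)\mathfrak T^\dagger$ and ${}^\mathfrak T P_1^{(1)}=\mathfrak T\,{}^{\tt I}P_1^{(1)}\,\mathfrak T^\dagger$; then by cyclicity of the trace, setting $\sigma:=Red^n(\rho)$, $\tilde\sigma:=\mathfrak T^\dagger\sigma\mathfrak T$ and $p:=\Prob_\mathfrak T(\rho)=\Prob_{\tt I}(\tilde\sigma)$, one gets $\Prob_\mathfrak T({}^\alpha\mathcal{BF}^{(n)}_\mathfrak T\rho)=\Prob_{\tt I}({}^\alpha\mathcal{BF}^{(1)}\tilde\sigma)$.

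The one-qubit step is then immediate from the Bloch representation recalled earlier in this section: writing $\tilde\sigma=\tfrac12({\tt I}+x\sigma_x+y\sigma_y+z\sigma_z)$ one has $\Prob_{\tt I}(\tilde\sigma)=\tfrac{1-z}{2}$, and the bit-flip channel (the case $\beta=\gamma=0$ of the displayed Bloch transformation) sends $z\mapsto(1-2|\alpha|^2)z$, so
$$\Prob_{\tt I}({}^\alpha\mathcal{BF}^{(1)}\tilde\sigma)=\frac{1-(1-2|\alpha|^2)z}{2}=(1-2|\alpha|^2)\,p+|\alpha|^2 .$$
Hence on $\mathfrak T$-probabilities the channel ${}^\alpha\mathcal{BF}^{(n)}_\mathfrak T$ acts as the affine map $f(p)=(1-2|\alpha|^2)p+|\alpha|^2$, which has fixed point $\tfrac12$, and since $\alpha\neq0$ satisfies $f(p)\le p\iff|\alpha|^2(1-2p)\le0\iff p\ge\tfrac12$, the inequality being strict exactly when $p>\tfrac12$. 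This single fact drives all three items.

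For (i) I would check the clauses of Definition~\ref{de:know}: $\KBF$ carries the prescribed domain $EpD(\KBF)\subseteq D$, maps density operators to density operators (${}^\alpha\mathcal{BF}^{(n)}_\mathfrak T$ is a completely positive trace-preserving channel and $\overline{\rho_0}$ is a density operator), and by convention outputs $\overline{\rho_0}$ off its domain; strongness, $\KBF\rho\preceq_\mathfrak T\rho$ for $\rho\in EpD(\KBF)$, is exactly $f(p)\le p$ with $p=\Prob_\mathfrak T(\rho)\ge\tfrac12$, and if some such $\rho$ has $\Prob_\mathfrak T(\rho)>\tfrac12$ then $f(p)<p$, giving non-triviality. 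For (ii): if an epistemic domain $S$ is not contained in $D$, picking $\rho\in S\setminus D$ gives $\Prob_\mathfrak T(\rho)<\tfrac12$, hence $\Prob_\mathfrak T({}^\alpha\mathcal{BF}^{(n)}_\mathfrak T\rho)=f(\Prob_\mathfrak T(\rho))>\Prob_\mathfrak T(\rho)$, contradicting strongness; so every domain yielding a strong operation lies in $D$, and since $D$ itself works by (i) it is the maximal one. For (iii): when $|\alpha|^2\le\tfrac12$ we have $1-2|\alpha|^2\ge0$, so $f$ is nondecreasing and on $[\tfrac12,1]$ attains its minimum $f(\tfrac12)=\tfrac12$; therefore $\rho\in D$ forces $\Prob_\mathfrak T(\KBF\rho)=f(\Prob_\mathfrak T(\rho))\ge\tfrac12$, i.e.\ $\KBF\rho\in D$.

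The step I expect to be the main obstacle is the reduction in the first paragraph: one must verify carefully that partial trace over the first $n-1$ subsystems really does intertwine the $n$-qubit channel with the one-qubit one, and that conjugating both the output state and the projector ${}^\mathfrak T P_1^{(1)}$ by $\mathfrak T^\dagger$ transports the problem faithfully to the canonical perspective. Once that bookkeeping is settled, items (i)--(iii) are just the elementary analysis of the affine map $f$ on $[0,1]$.
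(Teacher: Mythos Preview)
Your proposal is correct and is essentially the same argument the paper uses (as seen in the proofs of the analogous Theorems~\ref{th:depol} and the amplitude-damping theorem, to which the paper's own treatment of Theorem~\ref{th:bit} is referred): reduce to the last qubit via $Red^n$, conjugate by $\mathfrak T^\dagger$ to pass to the canonical perspective, and read off the effect on the $z$-coordinate of the Bloch vector. The only cosmetic difference is that you rephrase the inequality $(1-2|\alpha|^2)z\ge z\Leftrightarrow z\le 0$ as the equivalent statement about the affine map $f(p)=(1-2|\alpha|^2)p+|\alpha|^2$ on $\mathfrak T$-probabilities having fixed point $\tfrac12$; this is the same computation under the change of variable $p=(1-z)/2$.
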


In a similar way one can define strong epistemic operations that correspond to the phase-flip channel, the bit-phase-flip channel,
the depolarizing channel and the generalized amplitude-damping channel.

\begin{definition} \emph{(A phase-flip epistemic operation $\KPF$) }\nl
	Let $\gamma \neq 0$. Define $\KPF$ as follows:
	\begin{enumerate}
		\item $EpD(\KPF) \subseteq \mathfrak D(\Hn)$.
		\item $\rho \in EpD(\KPF) \;\Rightarrow \;
		\KPF \rho = \;^\gamma\mathcal{PF}^{(n)}_\mathfrak T \rho$.
	\end{enumerate}
\end{definition}

\begin{theorem}\label{th:kpf}\nl
	\nl ${\Prob}_\mathfrak T(\KPF\rho)
	= \Prob_\mathfrak T(\rho)$, for any $\rho \in EpD(\KPF)$.\nl
	Hence, $\KPF$ is a trivial epistemic operation.
	
\end{theorem}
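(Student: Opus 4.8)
The plan is to prove the stronger identity
$\Prob_\mathfrak T({}^\gamma\mathcal{PF}^{(n)}_\mathfrak T\rho)=\Prob_\mathfrak T(\rho)$ for \emph{every} $\rho\in\mathfrak D(\Hn)$, not merely for $\rho\in EpD(\KPF)$; the theorem then follows at once, because on the epistemic domain $\KPF\rho={}^\gamma\mathcal{PF}^{(n)}_\mathfrak T\rho$ by definition, so that $\Prob_\mathfrak T(\KPF\rho)=\Prob_\mathfrak T(\rho)$, and the triviality claim is then immediate: since equality holds everywhere, there is no $\rho\in EpD(\KPF)$ with strict inequality, which is exactly the negation of non-triviality (and $\KPF$ is indeed a strong epistemic operation, because ${}^\gamma\mathcal{PF}^{(n)}_\mathfrak T$ is a quantum channel and hence maps density operators to density operators, while equality in particular gives $\KPF\rho\preceq_\mathfrak T\rho$). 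I will obtain the identity by two successive reductions — first from $\Hn$ down to $\C^2$, then from the perspective $\mathfrak T$ down to the canonical perspective ${\tt I}$ — after which only a one-line computation with $\sigma_z$ remains.

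For the first reduction, recall from the extension convention for channels that ${}^\gamma\mathcal{PF}^{(n)}_\mathfrak T={\tt I}^{(n-1)}\otimes{}^\gamma\mathcal{PF}^{(1)}_\mathfrak T$, and that $\Prob_\mathfrak T(\rho)={\tt Tr}({}^\mathfrak TP_1^{(1)}\,Red^n(\rho))$, i.e. the $\mathfrak T$-probability depends only on the reduced state of the $n$-th subsystem. Since the partial trace over the first $n-1$ factors commutes with a channel acting only on the $n$-th factor, one has $Red^n({}^\gamma\mathcal{PF}^{(n)}_\mathfrak T\rho)={}^\gamma\mathcal{PF}^{(1)}_\mathfrak T(Red^n(\rho))$, so it suffices to prove $\Prob_\mathfrak T({}^\gamma\mathcal{PF}^{(1)}_\mathfrak T\sigma)=\Prob_\mathfrak T(\sigma)$ for every $\sigma\in\mathfrak D(\C^2)$. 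For the second reduction, note that ${}^\mathfrak TP_1^{(1)}=\mathfrak T\,{}^{\tt I}P_1^{(1)}\,\mathfrak T^\dagger$ (the $\mathfrak T$-truth of $\C^2$ projects onto $\mathrm{span}\{\mathfrak T\ket1\}$), and that by the definition of the twin-channel ${}^\gamma\mathcal{PF}^{(1)}_\mathfrak T\sigma=\sum_i\mathfrak T E_i\mathfrak T^\dagger\,\sigma\,\mathfrak T E_i^\dagger\mathfrak T^\dagger$ with $E_0=\sqrt{1-|\gamma|^2}\,{\tt I}$ and $E_3=|\gamma|\sigma_z$. Substituting these into $\Prob_\mathfrak T({}^\gamma\mathcal{PF}^{(1)}_\mathfrak T\sigma)={\tt Tr}({}^\mathfrak TP_1^{(1)}\,{}^\gamma\mathcal{PF}^{(1)}_\mathfrak T\sigma)$ and using cyclicity of the trace together with $\mathfrak T^\dagger\mathfrak T={\tt I}$, every occurrence of $\mathfrak T$ cancels, giving $\Prob_\mathfrak T({}^\gamma\mathcal{PF}^{(1)}_\mathfrak T\sigma)=\Prob_{\tt I}({}^\gamma\mathcal{PF}^{(1)}\tau)$ and likewise $\Prob_\mathfrak T(\sigma)=\Prob_{\tt I}(\tau)$, where $\tau:=\mathfrak T^\dagger\sigma\mathfrak T\in\mathfrak D(\C^2)$. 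Thus the whole statement collapses to $\Prob_{\tt I}({}^\gamma\mathcal{PF}^{(1)}\tau)=\Prob_{\tt I}(\tau)$.

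Finally, ${}^{\tt I}P_1^{(1)}=\ketbra11$ and ${}^\gamma\mathcal{PF}^{(1)}\tau=(1-|\gamma|^2)\tau+|\gamma|^2\sigma_z\tau\sigma_z$; since $\sigma_z\ket1=-\ket1$ we get $\langle1|\sigma_z\tau\sigma_z|1\rangle=\langle1|\tau|1\rangle$, hence $\Prob_{\tt I}({}^\gamma\mathcal{PF}^{(1)}\tau)=(1-|\gamma|^2)\langle1|\tau|1\rangle+|\gamma|^2\langle1|\tau|1\rangle=\langle1|\tau|1\rangle=\Prob_{\tt I}(\tau)$. (Equivalently, on the Bloch sphere the phase-flip fixes the $z$-coordinate, and $\Prob_{\tt I}$ is an affine function of $z$ alone.) This proves the identity, and with it the theorem. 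There is no genuine obstacle here — the content is bookkeeping — and the only point demanding a little care is keeping the two reductions consistent and performing the trace manipulations in the correct order; the algebraic core is just the commutation of $\sigma_z$ past the computational-basis projector, which is what makes the phase-flip probabilistically equivalent to the identity.
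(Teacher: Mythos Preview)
Your proof is correct and follows essentially the same route as the paper: reduce to the last qubit, conjugate by $\mathfrak T^\dagger$ to pass to the canonical perspective, and then observe that the phase-flip channel leaves the $z$-coordinate (equivalently, the diagonal entry $\langle 1|\tau|1\rangle$) unchanged, so the probability is preserved. The paper's version is merely terser, writing $\mathfrak T^\dagger Red^n(\rho)\mathfrak T$ in Bloch form and reading off $\frac{1-z}{2}$ directly.
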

\begin{proof}\nl
	(i)-(ii) Suppose $\rho\in EpD(\KPF)\subseteq \mathfrak D(\Hn)$.
	Let us consider ${\frak T}^\dagger Red^n(\rho){\frak
		T}=\frac{1}{2}(\Id+x \sigma_x+y \sigma_y+z \sigma_z)$. We have,
	$\Prob_{\frak T}(\KPF\rho)={\tt Tr}(\,^{\frak
		T}P_1^{(n)}\\ \KPF\rho) ={\tt Tr}(\,^\Id P_1^{(1)} \sum_{i}
	E_i\mathfrak T^\dagger\, Red^n(\rho)\, \mathfrak T E_i^{\dagger})
	=\frac{1-z}{2}=\Prob_{\frak T}(\rho)$.
\end{proof}

\begin{definition} \emph{(A bit-phase-flip epistemic operation $\KBPF$) }\nl
	Let $\beta \neq 0$.
	Define $\KBPF$ as follows:
	\begin{enumerate}
		\item $EpD(\KBPF) \subseteq
		D=\{\rho\in \mathfrak D(\Hn)\, |\,\Prob_{\mathfrak T}(\rho)\geq \frac{1}{2}\}$.
		\item $\rho \in EpD(\KBPF)\;\Rightarrow\;
		\KBPF\rho = \;^\beta\mathcal{BPF}^{(n)}_\mathfrak T \rho$.
	\end{enumerate}
\end{definition}

\begin{theorem}\nl
	\begin{enumerate}
		\item[(i)] Any $\KBPF$ is a strong epistemic operation. In
		particular, $\KBPF$ is a non-trivial epistemic operation
		if there exists at least one $\rho\in EpD(\KBPF)$ such
		that $\Prob_{\mathfrak T}(\rho)>\frac{1}{2}$.
		\item[(ii)] the set $D$ is the maximal set such that the corresponding
		$\KBPF$ is a strong epistemic operation.
		\item[(iii)] Let $|\beta|^2 \le \frac{1}{2}$ and let
		$EpD(\KBPF) = D$. The following closure property holds:
		for any $\rho\in D$, $\KBF \rho\in D$.
	\end{enumerate}
\end{theorem}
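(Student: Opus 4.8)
The plan is to mirror the proof of Theorem \ref{th:bit}, exploiting the structural parallel between the bit-flip channel $^\alpha\mathcal{BF}^{(1)}$ and the bit-phase-flip channel $^\beta\mathcal{BPF}^{(1)}$. Both channels act on the Bloch-sphere by the same type of deformation: recalling the general vector-transformation, $^\alpha\mathcal{BF}^{(1)}$ sends $z \mapsto (1-2|\alpha|^2)z$ while $^\beta\mathcal{BPF}^{(1)}$ sends $z \mapsto (1-2|\beta|^2)z$. Since $\Prob_\mathfrak T(\rho)$ depends only on the $\mathfrak T$-component of the reduced state $Red^n(\rho)$ along the ``$z$-axis'' of the $\mathfrak T$-frame (as used explicitly in the proof of Theorem \ref{th:kpf}, where $\Prob_\mathfrak T(\rho)=\frac{1-z}{2}$ with $\mathfrak T^\dagger Red^n(\rho)\mathfrak T = \frac12(\Id + x\sigma_x + y\sigma_y + z\sigma_z)$), the two channels have identical effect on $\mathfrak T$-probabilities. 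Concretely, I would first establish the key identity
$$\Prob_\mathfrak T\bigl(^\beta\mathcal{BPF}^{(n)}_\mathfrak T \rho\bigr) = \frac{1-(1-2|\beta|^2)z}{2} = (1-|\beta|^2)\Prob_\mathfrak T(\rho) + |\beta|^2\bigl(1 - \Prob_\mathfrak T(\rho)\bigr),$$
by the same trace computation as in Theorem \ref{th:kpf}, using the twin-channel definition $\mathcal E^{(1)}_\mathfrak T\rho = \sum_i \mathfrak T E_i \mathfrak T^\dagger\,\rho\,\mathfrak T E_i^\dagger \mathfrak T^\dagger$ with the Pauli operator $\sigma_y$ and the fact that $^\mathfrak T P_1^{(n)}$ pulls back to $^\Id P_1^{(1)}$ acting on the $n$-th subsystem.

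For part (i), once the identity above is in hand, I observe that $\Prob_\mathfrak T(^\beta\mathcal{BPF}^{(n)}_\mathfrak T\rho) - \Prob_\mathfrak T(\rho) = |\beta|^2\bigl(1 - 2\Prob_\mathfrak T(\rho)\bigr) \le 0$ precisely when $\Prob_\mathfrak T(\rho) \ge \frac12$, which holds for every $\rho \in EpD(\KBPF) \subseteq D$ by Definition of the bit-phase-flip epistemic operation. This shows $\KBPF\rho \preceq_\mathfrak T \rho$, so $\KBPF$ is a strong epistemic operation (the remaining clauses of Definition \ref{de:know} — the epistemic domain, the density-operator output, the collapse to $\overline{\rho_0}$ — are built into the definition and the fact that $^\beta\mathcal{BPF}^{(n)}_\mathfrak T$ is a quantum channel, hence trace-preserving and positive). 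Non-triviality follows because the inequality $|\beta|^2(1-2\Prob_\mathfrak T(\rho)) < 0$ is strict as soon as some $\rho \in EpD(\KBPF)$ has $\Prob_\mathfrak T(\rho) > \frac12$ (using $\beta \neq 0$).

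For part (ii), I would argue that $D$ is maximal by showing that any $\rho$ with $\Prob_\mathfrak T(\rho) < \frac12$ violates the strong-epistemic inequality: the same identity gives $\Prob_\mathfrak T(^\beta\mathcal{BPF}^{(n)}_\mathfrak T\rho) = \Prob_\mathfrak T(\rho) + |\beta|^2(1-2\Prob_\mathfrak T(\rho)) > \Prob_\mathfrak T(\rho)$, so such a $\rho$ cannot lie in the epistemic domain of any strong $\KBPF$; hence no set properly containing $D$ works. For part (iii), assuming $|\beta|^2 \le \frac12$ and $EpD(\KBPF) = D$, I need to check the closure property: for $\rho \in D$, i.e. $\Prob_\mathfrak T(\rho) \ge \frac12$, I must show $\Prob_\mathfrak T(\KBPF\rho) \ge \frac12$ (note the statement writes $\KBF$ here, evidently a typo for $\KBPF$). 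This is immediate from the convex-combination form: $\Prob_\mathfrak T(\KBPF\rho) = (1-|\beta|^2)\Prob_\mathfrak T(\rho) + |\beta|^2(1-\Prob_\mathfrak T(\rho))$ is a weighted average of $\Prob_\mathfrak T(\rho) \ge \frac12$ and $1 - \Prob_\mathfrak T(\rho) \le \frac12$ with weights $1-|\beta|^2 \ge \frac12$ and $|\beta|^2 \le \frac12$; since the larger weight sits on the value $\ge \frac12$, the average stays $\ge \frac12$ — explicitly, $\Prob_\mathfrak T(\KBPF\rho) - \frac12 = (1-2|\beta|^2)(\Prob_\mathfrak T(\rho) - \frac12) \ge 0$.

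The main obstacle is not any single step — each is a short trace computation or an elementary inequality — but rather getting the twin-channel trace computation exactly right: one must carefully track how conjugation by $\mathfrak T^{(n)}$ interacts with $^\mathfrak T P_1^{(n)}$ and confirm that only the $\sigma_y$-Kraus term shifts the $z$-coordinate (and by exactly the factor $1-2|\beta|^2$), matching the Bloch-sphere transformation quoted in the text for $\alpha=\gamma=0$. Everything downstream then reduces, just as for the bit-flip case, to the single scalar identity $\Prob_\mathfrak T(\KBPF\rho) = \Prob_\mathfrak T(\rho) + |\beta|^2(1 - 2\Prob_\mathfrak T(\rho))$.
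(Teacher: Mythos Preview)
Your proposal is correct and follows exactly the approach the paper intends: the paper's own proof reads in full ``Similar to the proof of Theorem \ref{th:bit},'' and you have carried out precisely that parallel, reducing everything to the Bloch-sphere identity $\Prob_\mathfrak T(\KBPF\rho)=\frac{1-(1-2|\beta|^2)z}{2}$ and the resulting scalar inequality. Your observation that $\KBF$ in part (iii) is a typo for $\KBPF$ is also apt.
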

\begin{proof}\nl
	\nl Similar to the proof of Theorem \ref{th:bit}.
\end{proof}

\begin{definition} \emph{(A depolarizing epistemic operation $\KD$)}\nl
	Let $p \neq 0$.
	Define $\KD$ as follows:
	\begin{enumerate}
		\item $EpD(\KD) \subseteq
		D =\{\rho\in \mathfrak D(\Hn)\, |\,\Prob_{\mathfrak T}(\rho)\geq \frac{1}{2}\}$.
		\item $\rho \in EpD(\KD)\;\Rightarrow \;
		\KD\rho = \;^p\mathcal {D}^{(n)}_\mathfrak T \rho$.
	\end{enumerate}
\end{definition}
Notice that for any truth-perspectives
$\mathfrak T$, $^p\mathcal {D}^{(n)}_\mathfrak T=\,^p\mathcal {D}^{(n)}_\Id$.

\begin{theorem}\label{th:depol}\nl
	\begin{enumerate}
		\item[(i)] Any $\KD$ is a strong epistemic operation. In particular,
		$\KD$ is a non-trivial epistemic operation if there exists
		at least one $\rho\in EpD(\KD)$ such that
		$\Prob_{\mathfrak T}(\rho)>\frac{1}{2}$.
		\item[(ii)] the set $D$ is the maximal set such that the corresponding
		$\KD$ is a strong epistemic operation.
		\item[(iii)] Let $EpD(\KD)=D$. Then, for any $\rho\in
		EpD(\KD)$, we have: $\KD \rho \in EpD(\KD).$
	\end{enumerate}
\end{theorem}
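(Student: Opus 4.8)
The plan is to reduce the whole statement to a single transformation formula for the $\mathfrak T$-probability under the depolarizing channel, and then read off (i)--(iii) from an elementary sign analysis of an affine function. First I would fix $\rho\in EpD(\KD)\subseteq D$ and compute $\Prob_\mathfrak T(\KD\rho)=\Prob_\mathfrak T({}^p\mathcal D^{(n)}_\mathfrak T\rho)$. Since ${}^p\mathcal D^{(n)}_\mathfrak T={\tt I}^{(n-1)}\otimes{}^p\mathcal D^{(1)}_\mathfrak T$, since $\Prob_\mathfrak T(\rho)={\tt Tr}({}^\mathfrak T P_1^{(1)}\,Red^n(\rho))$ depends only on the reduced state of the last factor, and since $Red^n$ intertwines ${\tt I}^{(n-1)}\otimes(\cdot)$ with $(\cdot)$, we get $Red^n(\KD\rho)={}^p\mathcal D^{(1)}_\mathfrak T(Red^n(\rho))$. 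Then, exactly as in the proof of Theorem \ref{th:kpf}, I would write $\mathfrak T^\dagger Red^n(\rho)\,\mathfrak T=\frac12({\tt I}+x\sigma_x+y\sigma_y+z\sigma_z)$, so that $\Prob_\mathfrak T(\rho)=\frac{1-z}{2}$; using the truth-perspective independence ${}^p\mathcal D^{(1)}_\mathfrak T={}^p\mathcal D^{(1)}_{\tt I}$ and the Bloch-ball action $(x,y,z)\mapsto((1-p)x,(1-p)y,(1-p)z)$ (equivalently ${}^p\mathcal D^{(1)}\sigma=(1-p)\sigma+\tfrac p2{\tt I}$, and ${\tt Tr}({}^\mathfrak T P_1^{(1)})=1$), I obtain the key identity
$$\Prob_\mathfrak T(\KD\rho)=\frac{1-(1-p)z}{2}=(1-p)\,\Prob_\mathfrak T(\rho)+\frac p2 .$$

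From here everything is routine. For (i): $\Prob_\mathfrak T(\KD\rho)-\Prob_\mathfrak T(\rho)=p\big(\tfrac12-\Prob_\mathfrak T(\rho)\big)\le 0$ for every $\rho\in D$, because $p\neq 0$ and $\Prob_\mathfrak T(\rho)\ge\tfrac12$ on $D$, with strict inequality exactly when $\Prob_\mathfrak T(\rho)>\tfrac12$. Combined with the facts that ${}^p\mathcal D^{(n)}_\mathfrak T$ is a quantum channel (hence trace-preserving and positive, so it carries $\mathfrak D(\Hn)$ into itself) and that on the complement of $EpD(\KD)$ the operation equals the fixed density operator $\overline{\rho_0}$, this shows that $\KD$ meets Definition \ref{de:know} and is a strong epistemic operation which, under the stated hypothesis, is non-trivial. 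For (ii): if a density operator $\rho$ has $\Prob_\mathfrak T(\rho)<\tfrac12$, the same identity gives $\Prob_\mathfrak T(\KD\rho)>\Prob_\mathfrak T(\rho)$, so no choice of epistemic domain containing a point outside $D$ can make $\KD$ strong; hence $D$ is the maximal admissible domain. For (iii): with $EpD(\KD)=D$ and $\rho\in D$ we get $\Prob_\mathfrak T(\KD\rho)=(1-p)\Prob_\mathfrak T(\rho)+\tfrac p2\ge(1-p)\tfrac12+\tfrac p2=\tfrac12$, so $\KD\rho\in D=EpD(\KD)$; note that, unlike the bit-flip case (Theorem \ref{th:bit}(iii)), no restriction on $p$ is needed here, since $\tfrac12$ is precisely the fixed point of the affine map $t\mapsto(1-p)t+\tfrac p2$, which therefore sends $[\tfrac12,1]$ into itself.

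I do not expect a genuine obstacle: the only step demanding care is the derivation of the probability identity — in particular checking that passing to the reduced state $Red^n$ commutes with the $n$-fold extension ${\tt I}^{(n-1)}\otimes{}^p\mathcal D^{(1)}_\mathfrak T$, and invoking ${}^p\mathcal D^{(n)}_\mathfrak T={}^p\mathcal D^{(n)}_{\tt I}$ so that the Bloch-ball contraction is the same in every frame. Once that identity is in place, (i)--(iii) are immediate consequences of the behaviour of $t\mapsto(1-p)t+\tfrac p2$ near its fixed point $\tfrac12$; the whole argument runs parallel to the proof of Theorem \ref{th:bit}, with the bit-flip update $t\mapsto(1-|\alpha|^2)t+|\alpha|^2(1-t)$ replaced by the depolarizing update $t\mapsto(1-p)t+\tfrac p2$.
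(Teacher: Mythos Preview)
Your proposal is correct and follows essentially the same route as the paper: write $\mathfrak T^\dagger Red^n(\rho)\mathfrak T=\frac12({\tt I}+x\sigma_x+y\sigma_y+z\sigma_z)$, obtain $\Prob_\mathfrak T(\KD\rho)=\frac{1-(1-p)z}{2}$, and read off (i)--(iii) from the sign of $z$ (equivalently, of $\Prob_\mathfrak T(\rho)-\tfrac12$). The only cosmetic difference is that you phrase the analysis in terms of the affine map $t\mapsto(1-p)t+\tfrac p2$ on probabilities, whereas the paper works directly with the equivalent condition $(1-p)z\ge z\Leftrightarrow z\in[-1,0]$; your version is slightly more explicit about the reduction to $Red^n$ and the invariance ${}^p\mathcal D^{(n)}_\mathfrak T={}^p\mathcal D^{(n)}_{\tt I}$, but the argument is the same.
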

\begin{proof}\nl
	(i)-(ii) Suppose $\rho\in EpD(\KD)\subseteq D$. Let us consider
	${\frak T}^\dagger Red^n(\rho){\frak T}=\frac{1}{2}(\Id+x \sigma_x+y \sigma_y+z \sigma_z)$.
	We have, $\Prob_{\frak T}(\KD\rho)=\frac{1-(1-p)z}{2}$. Hence,
	$\KD \rho \preceq_\mathfrak T \rho \Leftrightarrow (1-p)z\ge z \Leftrightarrow z\in[-1,0] \Leftrightarrow
	\Prob_{\mathfrak T}(\rho)\geq \frac{1}{2}$. Thus, $\KD$ is an epistemic operation.\\
	(iii) $\Prob_{\frak T}(\KD\rho)=\frac{1-(1-p)z}{2}\ge \frac{1}{2}$, since $z\in[-1,0]$.\\
\end{proof}

\begin{definition} \emph{(A generalized amplitude-damping epistemic operation $\KAD$})\nl
	Let $p, \lambda \in[0,1]$. Define $\KAD$ as follows:
	\begin{enumerate}
		\item $EpD(\KAD) \subseteq
		AD = \{\rho\in \mathfrak D(\Hn)\,|\, \Prob_{\mathfrak T}(\rho)\geq 1 - \lambda\}$.\nl
		\item $\rho \in EpD(\KAD) \;\Rightarrow\;
		\KAD\rho =\,^{p,\lambda}\mathcal {AD}^{(n)}_\mathfrak T\rho$.
	\end{enumerate}
\end{definition}

\begin{theorem}\nl
	\begin{enumerate}
		\item[(i)] Any $\KAD$ is a strong epistemic operation. In particular,
		$\KAD$ is a non-trivial epistemic operation if there
		exists at least one $\rho\in EpD(\KAD)$ such that
		$\Prob_{\mathfrak T}(\rho)>1-\lambda$.
		\item[(ii)] the set $AD$ is the maximal set such that the corresponding
		$\KAD$ is a strong epistemic operation.
		\item[(iii)] Let $EpD(\KAD)=AD$. Then, for any  $\rho\in
		EpD(\KAD)$, we have:  $\KAD \rho\in EpD(\KAD)$.
	\end{enumerate}
\end{theorem}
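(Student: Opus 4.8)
The plan is to follow the structure of the proof of Theorem~\ref{th:depol}, reducing every probability to the Bloch vector of the single-qubit state $\mathfrak T^\dagger Red^n(\rho)\mathfrak T$. Write $\mathfrak T^\dagger Red^n(\rho)\mathfrak T=\frac12(\Id+x\sigma_x+y\sigma_y+z\sigma_z)$, so that ${\Prob}_{\mathfrak T}(\rho)=\frac{1-z}{2}$ via the identity ${\Prob}_{\mathfrak T}(\rho)={\tt Tr}({}^{\mathfrak T}P_1^{(1)}Red^n(\rho))$ and the twin-relation ${}^{\mathfrak T}P_1^{(1)}=\mathfrak T\,{}^{\Id}P_1^{(1)}\mathfrak T^\dagger$. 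Since $\KAD\rho={\tt I}^{(n-1)}\otimes\,{}^{p,\lambda}\mathcal{AD}^{(1)}_\mathfrak T\rho$, reduction of the last subsystem commutes with the channel, and conjugation by $\mathfrak T^\dagger$ turns ${}^{p,\lambda}\mathcal{AD}^{(1)}_\mathfrak T$ into ${}^{p,\lambda}\mathcal{AD}^{(1)}_{\Id}$; hence, by the Bloch-sphere transformation of the amplitude-damping channel recorded just before Fig.~\ref{AD}, the state $\mathfrak T^\dagger Red^n(\KAD\rho)\mathfrak T$ has $z$-coordinate $(1-p)z+p(2\lambda-1)$, so that ${\Prob}_{\mathfrak T}(\KAD\rho)=\frac{1-(1-p)z-p(2\lambda-1)}{2}$.

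For parts (i) and (ii), comparing the two probabilities gives (assuming $p\neq0$, as in the parallel definitions above)
$$\KAD\rho\preceq_\mathfrak T\rho \;\Leftrightarrow\; (1-p)z+p(2\lambda-1)\ge z \;\Leftrightarrow\; z\le 2\lambda-1 \;\Leftrightarrow\; {\Prob}_{\mathfrak T}(\rho)\ge 1-\lambda .$$
Thus on any $EpD(\KAD)\subseteq AD$ the defining condition $\KAD\rho\preceq_\mathfrak T\rho$ of a strong epistemic operation holds, while conditions~1--3 of Definition~\ref{de:know} are immediate (${}^{p,\lambda}\mathcal{AD}^{(n)}_\mathfrak T$ is a quantum channel, hence maps $\mathfrak D(\Hn)$ into itself, and off the domain $\KAD$ collapses to the fixed $\overline{\rho_0}$ by construction); this proves~(i). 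The same chain shows that every $\rho$ with ${\Prob}_{\mathfrak T}(\rho)<1-\lambda$ violates $\KAD\rho\preceq_\mathfrak T\rho$, so $AD$ is precisely the largest admissible epistemic domain, proving~(ii). Non-triviality follows from the strict form of the first equivalence: ${\Prob}_{\mathfrak T}(\KAD\rho)<{\Prob}_{\mathfrak T}(\rho)$ iff $z<2\lambda-1$ iff ${\Prob}_{\mathfrak T}(\rho)>1-\lambda$.

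For part (iii), let $EpD(\KAD)=AD$ and $\rho\in AD$, so $z\le 2\lambda-1$. Since $0\le 1-p$,
$$(1-p)z+p(2\lambda-1)\le (1-p)(2\lambda-1)+p(2\lambda-1)=2\lambda-1 ,$$
whence ${\Prob}_{\mathfrak T}(\KAD\rho)=\frac{1-[(1-p)z+p(2\lambda-1)]}{2}\ge 1-\lambda$, i.e.\ $\KAD\rho\in AD=EpD(\KAD)$.

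I do not anticipate a real obstacle, as the argument is wholly parallel to Theorems~\ref{th:bit} and~\ref{th:depol}. The only delicate point is bookkeeping the $\mathfrak T$-conjugation --- the identities $\mathfrak T^\dagger\,{}^{p,\lambda}\mathcal{AD}^{(1)}_\mathfrak T(\sigma)\,\mathfrak T={}^{p,\lambda}\mathcal{AD}^{(1)}_{\Id}(\mathfrak T^\dagger\sigma\mathfrak T)$ and $Red^n\big(({\tt I}^{(n-1)}\otimes\mathcal E^{(1)})\rho\big)=\mathcal E^{(1)}(Red^n(\rho))$ --- together with observing that the ``in particular'' clause tacitly presupposes $p\neq0$ (and is vacuous unless $\lambda>0$), exactly as the corresponding clauses in the preceding theorems.
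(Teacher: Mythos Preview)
Your proof is correct and follows essentially the same route as the paper: reduce to the Bloch $z$-coordinate of $\mathfrak T^\dagger Red^n(\rho)\mathfrak T$, compute ${\Prob}_{\mathfrak T}(\KAD\rho)=\frac{1-(1-p)z-p(2\lambda-1)}{2}$, and run the chain of equivalences $\KAD\rho\preceq_\mathfrak T\rho\Leftrightarrow z\le 2\lambda-1\Leftrightarrow{\Prob}_{\mathfrak T}(\rho)\ge 1-\lambda$ for (i)--(ii), then use $z\le 2\lambda-1$ again for the closure in (iii). You are simply more explicit than the paper about the $\mathfrak T$-conjugation and partial-trace bookkeeping, the non-triviality clause, and the implicit $p\neq 0$ needed to divide out $p$ in the middle equivalence.
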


\begin{proof}\nl
	(i)-(ii) Suppose $\rho\in EpD(\KAD)\subseteq AD$. Let us consider
	${\frak T}^\dagger Red^n(\rho){\frak T}=\frac{1}{2}(\Id+x \sigma_x+y \sigma_y+z \sigma_z)$.
	We have, $\Prob_{\frak T}(\KAD\rho)=\frac{1-(1-p)z-p(2\lambda-1)}{2}$.
	Hence,
	$\KAD \rho \preceq_\mathfrak T \rho \Leftrightarrow (1-p)z+p(2\lambda-1)\ge z \Leftrightarrow 2\lambda-1\ge z \Leftrightarrow \Prob_{\mathfrak T}(\rho)\geq 1-\lambda$. Thus, $\KAD$ is a strong epistemic operation.\\
	(iii) $\Prob_{\frak T}(\KAD\rho)=\frac{1-(1-p)z-p(2\lambda-1)}{2}\ge 1-\lambda$, since $2\lambda-1\ge z$.\\
\end{proof}

The following theorem sums up some interesting properties of the strong epistemic operations defined above.

\begin{theorem}\nl
	\begin{itemize}
		\item[(i)] All strong epistemic operations
		$\KBF$,
		$\KPF$,
		$\KBPF$,
		$\KD$ are probabilistically consistent.
		\item[(ii)] All strong epistemic operations
		$\KAD$ with
		$\lambda\ge\frac{1}{2}$ are probabilistically consistent.
		\item[(iii)] All strong epistemic operations
		$\KPF$ are monotonic,
		positively and negatively introspective.
		\item[(iv)] All strong epistemic operations $\KBF$
		(with $|\alpha|^2\le \frac{1}{2}$),
		$\KBPF$
		(with $|\beta|^2 \le \frac{1}{2}$), $\KD$,
		whose epistemic domain is a subset of
		$D=\{\rho\in \mathfrak D(\Hn)\, |\, \Prob_{\mathfrak T}(\rho)\geq \frac{1}{2}\}$ are monotonic.
		\item[(v)] All strong epistemic operations
		$\KAD$ whose epistemic domain is a subset of
		$AD = \{\rho\in \mathfrak D(\mathcal H^{(n)})\, |\, \Prob_{\mathfrak T}(\rho)\geq 1-\lambda\}$ are monotonic.
	\end{itemize}
\end{theorem}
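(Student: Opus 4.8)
The plan is to establish each of the five items by reducing all probability computations to the single-qubit Bloch-vector formula already extracted in the proofs of Theorems~\ref{th:depol} and the amplitude-damping theorem, together with the corresponding formulas for the bit-flip and bit-phase-flip cases. Concretely, for any $\rho\in\mathfrak D(\Hn)$ I would write $\mathfrak T^\dagger Red^n(\rho)\mathfrak T=\frac12(\Id+x\sigma_x+y\sigma_y+z\sigma_z)$, so that $\Prob_{\mathfrak T}(\rho)=\frac{1-z}{2}$, and record the action of each channel on the $z$-coordinate: $z\mapsto(1-2|\alpha|^2)z$ for $\KBF$ (since the bit-flip ellipsoid has $x$ as symmetry axis, hence coefficient $1-2|\alpha|^2$ on $z$), $z\mapsto z$ for $\KPF$, $z\mapsto(1-2|\beta|^2)z$ for $\KBPF$, $z\mapsto(1-p)z$ for $\KD$, and $z\mapsto(1-p)z+p(2\lambda-1)$ for $\KAD$. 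Because $\mathcal E^{(n)}_\mathfrak T={\tt I}^{(n-1)}\otimes\mathcal E^{(1)}_\mathfrak T$ acts only on the $n$-th subsystem, and because $\Prob_{\mathfrak T}$ depends only on that subsystem, these one-qubit formulas suffice in all dimensions; the twin-channel conjugation by $\mathfrak T$ is absorbed into the change of basis at the outset, exactly as in the displayed proof of Theorem~\ref{th:kpf}.

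For item~(i), probabilistic consistency requires $\Prob_{\mathfrak T}(\E\rho)\le\Prob_{\mathfrak T}({}^{\mathfrak D}\Not_\mathfrak T\E\,{}^{\mathfrak D}\Not_\mathfrak T\rho)$. The negation twin-gate sends $z\mapsto-z$ at the level of the reduced Bloch vector (it swaps $\mathfrak T$-truth and $\mathfrak T$-falsity), so if $\rho$ has $z$-coordinate $z$ then ${}^{\mathfrak D}\Not_\mathfrak T\rho$ has $-z$; applying the channel and then $\Not_\mathfrak T$ again, one computes the right-hand side as $\frac{1-c\,z}{2}$ (for $\KBF,\KPF,\KBPF,\KD$ with $c\in\{1-2|\alpha|^2,1,1-2|\beta|^2,1-p\}\subseteq[0,1]$ once we use that these are genuine channels), while the left-hand side is also $\frac{1-c\,z}{2}$ — in fact for these four channels the two sides coincide because the action on $z$ is an odd linear map with no constant term, so consistency holds with equality on all of $\mathfrak D(\Hn)$. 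For $\KAD$ the action $z\mapsto(1-p)z+p(2\lambda-1)$ has a constant term; redoing the same computation the left side is $\frac{1-(1-p)z-p(2\lambda-1)}{2}$ and the right side is $\frac{1-(1-p)z+p(2\lambda-1)}{2}$, so consistency reduces to $p(2\lambda-1)\ge -p(2\lambda-1)$, i.e. $2\lambda-1\ge0$, giving item~(ii).

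For items~(iii)--(v), monotonicity of $\E$ means $\Prob_{\mathfrak T}(\rho)\le\Prob_{\mathfrak T}(\sigma)\Rightarrow\Prob_{\mathfrak T}(\E\rho)\le\Prob_{\mathfrak T}(\E\sigma)$, which in Bloch coordinates reads $z_\rho\ge z_\sigma\Rightarrow f(z_\rho)\le f(z_\sigma)$ where $f$ is the affine map the channel induces on $z$. Thus monotonicity is exactly the statement that $f$ is non-increasing, i.e. that its slope is $\le0$: slope $-(1-2|\alpha|^2)\le0$ precisely when $|\alpha|^2\le\frac12$, slope $0$ for $\KPF$ (always monotonic), slope $-(1-2|\beta|^2)\le0$ when $|\beta|^2\le\frac12$, slope $-(1-p)\le0$ always for $\KD$, and slope $-(1-p)\le0$ always for $\KAD$ — establishing (iv) and (v) and the monotonicity half of (iii). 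For the introspection claims about $\KPF$, positive introspection $\Prob_{\mathfrak T}(\KPF\rho)\le\Prob_{\mathfrak T}(\KPF\KPF\rho)$ is immediate from $\Prob_{\mathfrak T}(\KPF\rho)=\Prob_{\mathfrak T}(\rho)$ (Theorem~\ref{th:kpf}), which forces equality; negative introspection follows the same way once one checks that $\KPF$ commutes appropriately with $\Not_\mathfrak T$ at the Bloch level (both act diagonally on the axes $x,y,z$), so $\Prob_{\mathfrak T}({}^{\mathfrak D}\Not_\mathfrak T\KPF\rho)=\Prob_{\mathfrak T}(\KPF\,{}^{\mathfrak D}\Not_\mathfrak T\KPF\rho)$.

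The main obstacle is bookkeeping rather than depth: one must be careful about which Pauli conjugation fixes which axis, so that the correct coefficient appears in the $z$-slot for each of $\KBF$ and $\KBPF$ (the paper's ellipsoid picture gives $x$ and $y$ as the respective symmetry axes, which is what makes the $z$-coefficients $1-2|\alpha|^2$ and $1-2|\beta|^2$), and one must keep the domain restrictions in play — for $\KBF$ and $\KBPF$ the channel is only a strong epistemic operation on $D$, but the monotonicity conclusions in (iv) are stated with the extra hypotheses $|\alpha|^2,|\beta|^2\le\frac12$, which is exactly the regime where the Bloch slope is non-positive, so no domain subtlety beyond $D$ is actually needed. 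A final routine point is to confirm that conjugating the Kraus operators by $\mathfrak T$ and then restricting to the reduced $n$-th subsystem genuinely yields the single-qubit transformation in the $\mathfrak T$-rotated frame; this is the content of the computation already performed in the proof of Theorem~\ref{th:kpf} and is reused verbatim here.
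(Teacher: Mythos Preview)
Your approach is essentially the paper's: reduce everything to the $z$-coordinate of $\mathfrak T^\dagger Red^n(\rho)\mathfrak T$ and read off the affine action of each channel. Two points need cleaning up, however.

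First, your monotonicity translation is reversed. From $\Prob_{\mathfrak T}(\rho)=\frac{1-z}{2}$ one gets $\Prob_{\mathfrak T}(\rho)\le\Prob_{\mathfrak T}(\sigma)\Leftrightarrow z_\rho\ge z_\sigma$, and likewise $\Prob_{\mathfrak T}(\E\rho)\le\Prob_{\mathfrak T}(\E\sigma)\Leftrightarrow f(z_\rho)\ge f(z_\sigma)$; so monotonicity requires $f$ to be \emph{non-decreasing}, i.e.\ slope $\ge 0$. The actual slopes on $z$ are $1-2|\alpha|^2$, $1$, $1-2|\beta|^2$, $1-p$, $1-p$ (not their negatives, and not $0$ for $\KPF$). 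Your conclusions are nonetheless correct because the sign error in the criterion and the sign error in the slopes cancel; the paper carries out exactly this computation, writing $\Prob_{\mathfrak T}(\K\rho)=\frac{1-(1-2|\alpha|^2-2|\beta|^2)z}{2}$ and comparing directly.

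Second, for item~(i) you only treat the case where both $\rho$ and ${}^{\mathfrak D}\Not_\mathfrak T\rho$ lie in the epistemic domain. The paper disposes of the remaining cases by fixing $\overline{\rho_0}={}^{\mathfrak T}P_0^{(n)}$, so that $\rho\notin EpD(\K)$ forces $\Prob_{\mathfrak T}(\K\rho)=0$ and ${}^{\mathfrak D}\Not_\mathfrak T\rho\notin EpD(\K)$ forces $\Prob_{\mathfrak T}({}^{\mathfrak D}\Not_\mathfrak T\K\,{}^{\mathfrak D}\Not_\mathfrak T\rho)=1$; you should state this assumption and these two trivial cases explicitly. With those corrections your argument matches the paper's line for line.
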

\begin{proof}\nl
	\nl Let  $\K \in \parg{\KBF,
		\KPF,
		\KBPF,
		\KD}$ and suppose that $\K$ is induced by the following
	Kraus operators:
	$$E_0=\sqrt{1-|\alpha|^2-|\beta|^2-|\gamma|^2}\,I;\,\,
	E_1=|\alpha|\sigma_x;\,\,E_2=|\beta|\sigma_y;\,\,
	E_3=|\gamma|\sigma_z.$$  Assume  that the fixed element
	$\overline{\rho_0}$ of our epistemic operations is $^{\frak T}
	P^{(n)}_0$. \nl For any $\rho, \,\rho' \in  \mathfrak D(\Hn)$, we
	can write:\nl ${\frak T}^\dagger Red^n(\rho){\frak
		T}=\frac{1}{2}(\Id+x \sigma_x+y \sigma_y+z \sigma_z);$\nl ${\frak
		T}^\dagger Red^n(\rho'){\frak T}=\frac{1}{2}(\Id+x' \sigma_x+y'
	\sigma_y+z' \sigma_z).$

	\begin{enumerate}
		\item [(i)] Let $\rho \in \mathfrak D(\Hn)$. We have: \nl
		$\rho\not\in EpD(\K)\,\,\Rightarrow \,\, \Prob_{\frak
			T}(\K\rho)=0;$\nl $^{\frak D}\Not_\frak T\rho\not\in EpD(\K)
		\,\, \Rightarrow \Prob_{\frak T}(\,^{\frak D}\Not_\frak T \K
		\,^{\frak D}\Not_\frak T\rho)=1.$ \nl Otherwise, we obtain:\nl
		$\Prob_{\frak T}(\,^{\frak D}\Not_\frak T \K \,^{\frak
			D}\Not_\frak T\rho)=$\nl ${\tt Tr}(\,^\Id P_1^{(1)}
		\Not^{(1)}\sum_{i} (E_i \Not^{(1)}\mathfrak T^\dagger\,
		Red^n(\rho)\, \mathfrak T \Not^{(1)} E_i^{\dagger})
		\Not^{(1)}) =\frac{1-(1-2 |\alpha|^2-2 |\beta|^2) z}{2}
		=\Prob_{\frak T}(\K\rho)$.\nl Consequently:  $ \Prob_\mathfrak
		T(\K\rho) \le \Prob_{\frak T}(\,^{\frak D}\Not_\frak T \K
		\,^{\frak D}\Not_\frak T\rho)$.
		\item [(ii)] 
		Similarly.
		
		\item [(iii)] By Theorem \ref{th:kpf}.
		\item [(iv)] Let $\rho,\rho'\in EpD(\K)$. Suppose that
		$|\alpha|^2+|\beta|^2=\frac{1}{2}$. Then we have:
		$\Prob_{\frak T}(\K\rho)=\Prob_{\frak
			T}(\K\rho')=\frac{1}{2}$. Otherwise we obtain: \nl
		$\Prob_{\frak T}(\K\rho)\le\Prob_{\frak T}(\K\rho')
		\Leftrightarrow \frac{1-(1-2 |\alpha|^2-2 |\beta|^2)
			z}{2}\le \frac{1-(1-2 |\alpha|^2-2 |\beta|^2) z'}{2}
		\Leftrightarrow \frac{1-z}{2}\le \frac{1-z'}{2}
		\Leftrightarrow \Prob_{\frak T}(\rho)\le\Prob_{\frak
			T}(\rho')$.
		\item [(v)] 
		Similarly.
		
	\end{enumerate}
\end{proof}

Notice that in  the general case the monotonicity-property can be violated by strong epistemic operations. In fact, the following situation is possible:
\begin{itemize}
	\item $\rho \preceq_\mathfrak T \sigma$;
	\item $\rho \in EpD(\K)$; $\sigma\notin EpD(\K)$;
	\item $\K\rho \npreceq_{\mathfrak T} \K\sigma$.
\end{itemize}

Generally, the strong epistemic operations  $\KBF$,  $\KBPF$,  $\KD$,
$\KAD$ are neither positively introspective nor negatively
introspective. Suppose, for example, that $EpD(\K)=\{\K \,
^\mathfrak T P^{(n)}_1,\, ^\mathfrak T P^{(n)}_1 \}$, where $$\K\in\{\KBF,\,
\KBPF,\,\KD,\,\KAD\}.$$ By definition of $\K$ we have for any
$\rho \in EpD(\K)$: $\Prob_\mathfrak T(\K\rho) < \Prob_\mathfrak T
(\rho)$. Hence, $\Prob_{\frak T}(\K \, ^\mathfrak T
P^{(n)}_1)>\Prob_{\frak T}(\K\K \, ^\mathfrak T P^{(n)}_1 )$.
Thus, $\K$ is not positively introspective. Suppose then that
$EpD(\K)=\{\, ^{\frak D}\Not_\frak T \K \, ^\mathfrak T
P^{(n)}_0\}$. We have: ${\Prob}_{\frak T}(\,^{\frak D}\Not_\frak
T\K \, ^\mathfrak T P^{(n)}_0)>{\Prob}_{\frak T}(\K\,^{\frak
	D}\Not_\frak T\K \, ^\mathfrak T P^{(n)}_0)$, where $\K \, ^\mathfrak T P^{(n)}_0=\, ^\mathfrak T P^{(n)}_0$.
Thus, $\K$ is not
negatively introspective.


Truth-perspectives are, in a sense, similar to different frames of reference in relativity.
Accordingly, one could try and apply a ``relativistic'' way of thinking in order to describe how
a given agent can ``see'' the logical behavior of another agent.

As expected, the  logical behavior of any agent turns out to
depend, in this framework,  on two factors:
\begin{itemize}
	\item his/her idea of \emph{Truth} and \emph{Falsity};
	\item his/her choice of the gates that correspond to the basic logical operations.
\end{itemize}
Both these factors are, of course, determined by the agent's truth-perspective $\mathfrak T$.

As an example let us refer to two agents \emph{Alice} and \emph{Bob},
whose truth-perspectives are  $\mathfrak T_{Alice}$ and $\mathfrak T_{Bob}$, respectively.
Let $\parg{\ket{1_{Alice}}, \ket{0_{Alice}}}$ and  $\{\ket{1_{Bob}},$ $\ket{0_{Bob}}\}$
represent the systems of truth-values of our two agents.
Furthermore, for any canonical gate $^\mathfrak D G^{(n)}$
(defined with respect to the canonical truth-perspective $\tt I$),
let $^\mathfrak D G^{(n)}_{Alice}$ and $^\mathfrak D G^{(n)}_{Bob}$ represent the
corresponding \emph{twin-gates} for \emph{Alice} and for \emph{Bob}, respectively.

According to the rule assumed in Section 2, we have:
$$^\mathfrak D G^{(n)}_{Alice} \,=\, ^\mathfrak D(\mathfrak T_{Alice}^{(n)}G^{(n)}\mathfrak T^{(n)\dagger}_{Alice}).$$

In a similar way in the case of \emph{Bob}.

We will adopt the following conventional terminology.
\begin{itemize}
	\item When $\ket{1_{Bob}}= a_0\ket{0_{Alice}} + a_1\ket{1_{Alice}} $,
	we will say that \emph{Alice sees that Bob's Truth is $a_0\ket{0_{Alice}} + a_1\ket{1_{Alice}}$}.
	In a similar way, for \emph{Bob}'s \emph{Falsity}.
	\item When $^\mathfrak D G^{(n)}_{Alice}  \,=\, ^\mathfrak D(\mathfrak T_{Alice}^{(n)}G^{(n)}\mathfrak T^{(n)\dagger}_{Alice})$ and
	$^\mathfrak D G^{(n)}_{Bob} \,=\,
	^\mathfrak D(\mathfrak T_{Bob}^{(n)}G^{(n)}\mathfrak T^{(n)\dagger}_{Bob})\,
	= \, ^\mathfrak D G^{(n)}_{1_{Alice}}$ (where $^\mathfrak
	D G^{(n)}$ and $^\mathfrak D G^{(n)}_{1_{Alice}}$  are
	canonical gates),
	we will say that \emph{Alice sees Bob using the gate $^\mathfrak D G^{(n)}_{1_{Alice}}$
		in place of her gate $^\mathfrak D G^{(n)}_{Alice}$}.
	\item When $^\mathfrak D G^{(n)}_{Alice}  \,=\, ^\mathfrak D G^{(n)}_{Bob}$ we will say that
	\emph{Alice} and \emph{Bob} \emph{see and use the same gate},
	which represents (in their truth-perspective) the canonical gate $^\mathfrak D G^{(n)}$.
\end{itemize}

On this basis, one can conclude that, generally, \emph{Alice}
\emph{sees} a kind of  ``deformation'' in \emph{Bob}'s logical
behavior. 

In \cite{DCGLS14}  some example have been discussed.

As another insightful example, consider the behavior of the controlled-not gate described by the following theorem.

\begin{theorem}
	Let $\mathfrak{T_a}$ be the truth-perspective of an agent $\mathfrak a$.
	\nl If $\,\,^\mathfrak D \tt{\QXor}^{(1,1)}_{\mathfrak{T_a}}\,=
	\,^\mathfrak D \tt{\QXor}^{(1,1)}$, then $\mathfrak{T_a}\,=\,e^{i\theta}\tt I.$
\end{theorem}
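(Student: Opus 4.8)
Proof proposal. The plan is to unwind the definitions until the hypothesis becomes the statement that $\mathfrak{T_a}\otimes\mathfrak{T_a}$ commutes (up to a scalar) with the controlled-not gate, and then to exploit the key rigidity feature of $\QXor^{(1,1)}$: its unique $(-1)$-eigenvector is an \emph{unentangled} state built from two qubits that are linearly independent but not orthogonal. Concretely, by the definition of the unitary quantum operation ${}^{\mathfrak D}G_\mathfrak T$ and of the twin-gate, the hypothesis ${}^{\mathfrak D}\QXor^{(1,1)}_{\mathfrak{T_a}} = {}^{\mathfrak D}\QXor^{(1,1)}$ asserts that the two unitaries $\mathfrak{T_a}^{(2)}\,\QXor^{(1,1)}\,\mathfrak{T_a}^{(2)\dagger}$ and $\QXor^{(1,1)}$ induce the same conjugation map $\rho\mapsto X\rho X^\dagger$ on $\mathcal B(\mathcal H^{(2)})$. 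Two unitaries inducing the same conjugation map must agree up to a global phase (if $X\rho X^\dagger=\rho$ for every density operator $\rho$, then testing on rank-one projectors shows that every vector is an eigenvector of $X$, hence $X$ is a scalar). Since $\mathfrak{T_a}^{(2)}=\mathfrak{T_a}\otimes\mathfrak{T_a}$, this gives a real $\phi$ with
$$(\mathfrak{T_a}\otimes\mathfrak{T_a})\,\QXor^{(1,1)} = e^{i\phi}\,\QXor^{(1,1)}\,(\mathfrak{T_a}\otimes\mathfrak{T_a}).$$

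Next I would remove the phase. As $\QXor^{(1,1)}$ (acting on $\mathcal H^{(2)}=\C^2\otimes\C^2$ as the controlled-not) is a self-adjoint involution, its eigenvalue $+1$ has multiplicity $3$ and $-1$ has multiplicity $1$. The displayed identity exhibits $e^{i\phi}\QXor^{(1,1)}$ as unitarily conjugate to $\QXor^{(1,1)}$, hence with the same eigenvalues and multiplicities; the eigenvalue of multiplicity $3$ must match, so $e^{i\phi}=1$. Therefore $\mathfrak{T_a}\otimes\mathfrak{T_a}$ genuinely commutes with $\QXor^{(1,1)}$ and so maps the one-dimensional $(-1)$-eigenspace of $\QXor^{(1,1)}$ onto itself.

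Now the crux. Taking $\QXor^{(1,1)}$ with the first qubit as control (the opposite convention is handled by swapping the two tensor factors throughout), its $(-1)$-eigenspace is spanned by the \emph{product} vector $\ket1\otimes\ket{1_{Bell}}$, where $\ket{1_{Bell}}=\tfrac1{\sqrt2}(\ket0-\ket1)$: indeed $\QXor^{(1,1)}$ restricts to the identity on $\ket0\otimes\C^2$ and to the flip $\Not$ on $\ket1\otimes\C^2$, and $\Not\ket{1_{Bell}}=-\ket{1_{Bell}}$. Invariance of this line forces $(\mathfrak{T_a}\ket1)\otimes(\mathfrak{T_a}\ket{1_{Bell}})$ to be proportional to $\ket1\otimes\ket{1_{Bell}}$; but a product vector proportional to a fixed product vector must be proportional factor by factor, so $\mathfrak{T_a}\ket1=\mu\ket1$ and $\mathfrak{T_a}\ket{1_{Bell}}=\nu\ket{1_{Bell}}$ for unit scalars $\mu,\nu$. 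Since $\mathfrak{T_a}$ is unitary, hence normal, eigenvectors for distinct eigenvalues would be orthogonal; as $\langle1|1_{Bell}\rangle=-\tfrac1{\sqrt2}\ne0$ we conclude $\mu=\nu$, and because $\ket1,\ket{1_{Bell}}$ span $\C^2$ this yields $\mathfrak{T_a}=\mu\,{\tt I}$ with $|\mu|=1$, i.e. $\mathfrak{T_a}=e^{i\theta}{\tt I}$.

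The only step that genuinely requires care is identifying the $(-1)$-eigenspace of $\QXor^{(1,1)}$ as the line through a product of two \emph{non-orthogonal} qubits: this is exactly the structural reason why the controlled-not is incompatible with a nontrivial change of truth-perspective of the form $\mathfrak T\otimes\mathfrak T$ (in contrast with the Hadamard gate $\sqrt{\tt I}^{(1)}$ or $\sqrt{\Not}$). The phase-removal via eigenvalue multiplicities and the normality argument are routine. A brute-force alternative — parametrizing $\mathfrak{T_a}$ by its real parameters and solving the commutation equations entrywise — would also succeed, but it obscures this point.
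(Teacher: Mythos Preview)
Your argument is correct and complete. In particular, the phase-removal step via the eigenvalue multiplicities of $\QXor^{(1,1)}$, the identification of the $(-1)$-eigenspace as the line through the product vector $\ket{1}\otimes\ket{1_{Bell}}$, the factor-by-factor proportionality, and the normality argument forcing $\mu=\nu$ are all sound.

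The paper's own proof is a two-line sketch: it asserts that the hypothesis yields $(\mathfrak{T_a}\otimes\mathfrak{T_a})\,\QXor^{(1,1)}=\QXor^{(1,1)}\,(\mathfrak{T_a}\otimes\mathfrak{T_a})$ and then appeals to ``standard algebraic calculations'' to conclude $\mathfrak{T_a}=e^{i\theta}{\tt I}$, i.e.\ exactly the brute-force parametrize-and-solve route you mention at the end. Your approach differs in two respects. First, you are more careful than the paper about the passage from equality of the conjugation maps to commutation: you correctly note that a priori one only gets commutation up to a global phase $e^{i\phi}$, and you eliminate $\phi$ by comparing spectra---a point the paper's sketch elides. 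Second, instead of solving a system of polynomial equations in the entries of $\mathfrak{T_a}$, you give a structural argument pinpointing why $\QXor^{(1,1)}$ is rigid under local symmetries of the form $\mathfrak T\otimes\mathfrak T$: its unique $(-1)$-eigenline is a product of two non-orthogonal qubits. This buys conceptual clarity and explains, as you observe, why the analogous statement would fail for gates such as $\sqrt{\tt I}^{(1)}$; the paper's computation establishes the fact but does not isolate the mechanism.
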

\begin{proof}
	Suppose $\,\,^\mathfrak D \tt{\QXor}^{(1,1)}_{\mathfrak{T_a}}\,=
	\,^\mathfrak D \tt{\QXor}^{(1,1)}$. One can easily show that:\nl
	$(\mathfrak{T_a} \otimes \mathfrak{T_a})\tt{\QXor}^{(1,1)} =
	\tt{\QXor}^{(1,1)}(\mathfrak{T_a} \otimes \mathfrak{T_a})$.
	Consequently, by standard algebraic calculations we obtain:
	$\mathfrak{T_a}\,=\,e^{i\theta}\tt I.$
	
\end{proof}
As a consequence, one immediately obtains that two agents
$\mathfrak a$ and $\mathfrak b$ can see and use the same
$\tt{\QXor}$-gate only if  their truth-perspectives
$\mathfrak{T_a}$ and $\mathfrak{T_b}$ are probabilistically
equivalent.

Moreover, a relativistic way of thinking can also be applied to strong epistemic operations.
Alice sees Bob using a phase-flip channel instead of a bit-flip channel as strong epistemic operation.
Similarly, some other agent sees Bob's strong epistemic operation acts as a bit-phase-flip channel.

From a logical point of view, examples of epistemic situations  as the one we have
here investigated, can be formally reconstructed in the framework
of a \emph{quantum computational semantics} (see \cite{DBGS}). Let
us briefly recall  the basic ideas of this approach. We consider
an epistemic quantum computational language $\mathcal L^{Ep}$
consisting of:
\begin{itemize}
	\item atomic sentences;
	\item logical connectives corresponding to the following
	gates: negation, Toffoli, controlled-not, Hadamard and square
	root of negation;
	\item names for epistemic agents (say, Alice, Bob, ...);
	\item logical epistemic operators, corresponding to (generally
	irreversible) epistemic operations.
	
\end{itemize}

This language can express sentences like ``Alice knows that Bob
does not know that the spin-value in the $x$-direction is up''.
The semantics for $\mathcal L^{Ep}$ provides a convenient notion
of \emph{model}, whose role is assigning  informational meanings to
all sentences. Technically, a model of $\mathcal L^{Ep}$ is
defined as a map ${\tt Mod}$ that associates to any
truth-perspective $\mathfrak T$ and to any sentence $\alpha$ a
density operator $\rho = {\tt Mod_{\mathfrak T}}(\alpha)$, living in a Hilbert
space $\mathcal H^\alpha$, whose dimension depends on the
linguistic complexity of $\alpha$.

On this basis, one can give a natural
definition for the concepts of \emph{truth} and  of \emph{logical
	consequence} in terms of the notions of $\mathfrak
T$-probability (${\tt p}_\mathfrak T$) and of $\mathfrak
T$-preorder ($\preceq_\mathfrak T$):
\begin{itemize}
	\item a sentence $\alpha$ is \emph{true} with respect to a
	model ${\tt Mod}$ and to a truth-perspective $\mathfrak T$
	(abbreviated as $\vDash_{{\tt Mod}, \mathfrak T} \alpha$)
	iff ${\tt p}_\mathfrak T({\tt Mod_{\mathfrak T}}(\alpha)) = 1$;
	\item a sentence $\beta$ is a \emph{logical consequence} of a
	sentence $\alpha$ (abbreviated as $\alpha \vDash \beta$)
	iff for any  model ${\tt Mod}$ and any truth-perspective
	$\mathfrak T$, $${\tt
		Mod}_\mathfrak T(\alpha) \preceq_\mathfrak T  {\tt Mod}_\mathfrak T(\beta).$$

\end{itemize}

While truth is obviously dependent on the choice of a truth
perspective ${\mathfrak T}$, one can prove that the notion of
logical consequence represents an \emph{absolute} relation that
is invariant with respect to truth-perspective changes.

\begin{theorem}\cite{DBGS} \nl
	$\alpha \vDash \beta$ iff for any model ${\tt Mod}$, ${\tt Mod}_{\tt I}(\alpha) \preceq_{\tt I} {\tt Mod}_{\tt I}(\beta)$, where $\tt I$ is the
	canonical truth-perspective.
\end{theorem}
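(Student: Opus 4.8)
The plan is to prove the biconditional by establishing, for every model ${\tt Mod}$ and every truth-perspective $\mathfrak T$, the uniform equivalence
\[
{\tt Mod}_{\mathfrak T}(\alpha) \preceq_{\mathfrak T} {\tt Mod}_{\mathfrak T}(\beta) \;\iff\; {\tt Mod}_{\tt I}(\alpha) \preceq_{\tt I} {\tt Mod}_{\tt I}(\beta).
\]
Once this is in hand, quantifying over all ${\tt Mod}$ and all $\mathfrak T$ turns the left-hand side into $\alpha\vDash\beta$, while the right-hand side — which does not mention $\mathfrak T$ — becomes exactly the asserted condition. The implication from $\alpha\vDash\beta$ to the $\tt I$-condition is trivial, since $\tt I$ is one of the admissible truth-perspectives; so the real content is the displayed equivalence.

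The first ingredient I would use is a \emph{covariance} property of the semantics: for every sentence $\gamma$, writing $\mathcal H^{\gamma}=\mathcal H^{(m_\gamma)}$ for the (perspective-independent) space attached to $\gamma$,
\[
{\tt Mod}_{\mathfrak T}(\gamma) \;=\; \mathfrak{T}^{(m_\gamma)}\,{\tt Mod}_{\tt I}(\gamma)\,\mathfrak{T}^{(m_\gamma)\dagger}.
\]
I would prove this by induction on the linguistic complexity of $\gamma$. For atomic $\gamma$ it holds because the meaning is built through the chosen truth-perspective by construction. For a sentence formed by a connective, the interpreting clause applies the corresponding twin-gate $G^{(n)}_{\mathfrak T}=\mathfrak{T}^{(n)}G^{(n)}\mathfrak{T}^{(n)\dagger}$ as the unitary quantum operation $\rho\mapsto G^{(n)}_{\mathfrak T}\,\rho\,G^{(n)\dagger}_{\mathfrak T}$; substituting the induction hypothesis and using $\mathfrak{T}^{(n)\dagger}\mathfrak{T}^{(n)}={\tt I}^{(n)}$ returns the $\mathfrak T$-conjugate of the canonical output. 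For a sentence governed by an epistemic operator, the clause applies the $\mathfrak T$-indexed epistemic operation; here one uses that its epistemic domain is the $\mathfrak{T}^{(n)}$-image of the canonical one (so domain membership is preserved under conjugation), that on that domain it acts as the relevant twin-channel, and that its fixed collapse element $\overline{\rho_0}$ — for instance the $\mathfrak T$-Falsity $\frac{1}{2^{n-1}}\,{}^{\mathfrak T}P_0^{(n)}$ — is the $\mathfrak T$-conjugate of its canonical counterpart. In each case the conjugation relation propagates, closing the induction.

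The second ingredient is the \emph{invariance of $\mathfrak T$-probability} under the change $\mathfrak T$. Since the $\mathfrak T$-true registers are precisely the $\mathfrak{T}^{(n)}$-images of the canonical true registers, one has ${}^{\mathfrak T}P_1^{(n)}=\mathfrak{T}^{(n)}\,{}^{\tt I}P_1^{(n)}\,\mathfrak{T}^{(n)\dagger}$, whence for every $\rho\in\mathfrak D(\Hn)$
\begin{align*}
{\Prob}_{\mathfrak T}\bigl(\mathfrak{T}^{(n)}\rho\,\mathfrak{T}^{(n)\dagger}\bigr)
&= {\tt Tr}\bigl(\mathfrak{T}^{(n)}\,{}^{\tt I}P_1^{(n)}\,\mathfrak{T}^{(n)\dagger}\,\mathfrak{T}^{(n)}\rho\,\mathfrak{T}^{(n)\dagger}\bigr)\\
&= {\tt Tr}\bigl({}^{\tt I}P_1^{(n)}\rho\bigr) \;=\; {\Prob}_{\tt I}(\rho),
\end{align*}
using unitarity of $\mathfrak{T}^{(n)}$ and cyclicity of the trace. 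Combining this with the covariance property gives ${\Prob}_{\mathfrak T}({\tt Mod}_{\mathfrak T}(\gamma))={\Prob}_{\tt I}({\tt Mod}_{\tt I}(\gamma))$ for every sentence $\gamma$, so that
\begin{align*}
{\tt Mod}_{\mathfrak T}(\alpha) \preceq_{\mathfrak T} {\tt Mod}_{\mathfrak T}(\beta)
&\iff {\Prob}_{\mathfrak T}({\tt Mod}_{\mathfrak T}(\alpha)) \le {\Prob}_{\mathfrak T}({\tt Mod}_{\mathfrak T}(\beta))\\
&\iff {\Prob}_{\tt I}({\tt Mod}_{\tt I}(\alpha)) \le {\Prob}_{\tt I}({\tt Mod}_{\tt I}(\beta))\\
&\iff {\tt Mod}_{\tt I}(\alpha) \preceq_{\tt I} {\tt Mod}_{\tt I}(\beta).
\end{align*}
This is the uniform equivalence announced in the plan, and quantifying over all models and all truth-perspectives finishes the argument.

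I expect the genuine work to lie in the covariance induction, and precisely in its epistemic-operator clause: one must verify that the quantum computational semantics of $\mathcal L^{Ep}$ (in the sense of \cite{DBGS}) is arranged so that passing to the perspective $\mathfrak T$ amounts to conjugating by $\mathfrak{T}^{(m)}$ \emph{all} the data of each interpreting epistemic operation — its epistemic domain, its behaviour on that domain, and its fixed collapse element — because only then does the conjugation relation survive through nested epistemic operators. Once that is secured, the probability-invariance lemma and the closing chain of equivalences are routine.
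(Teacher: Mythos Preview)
The paper does not supply its own proof of this theorem: the statement is simply quoted with a citation to \cite{DBGS}, and the surrounding text only draws consequences from it. So there is nothing in the present paper to compare your argument against line by line.

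That said, your approach is the natural one and almost certainly matches what is done in \cite{DBGS}: reduce the claim to the pointwise identity ${\Prob}_{\mathfrak T}({\tt Mod}_{\mathfrak T}(\gamma))={\Prob}_{\tt I}({\tt Mod}_{\tt I}(\gamma))$ via a covariance lemma (conjugation by $\mathfrak T^{(m)}$ intertwines the $\mathfrak T$- and ${\tt I}$-semantics) together with the trace computation showing ${\Prob}_{\mathfrak T}$ is invariant under that conjugation. Your own caveat is well placed: the only nontrivial step is the inductive clause for epistemic operators, and its validity depends on how \cite{DBGS} sets up ${\tt Mod}_{\mathfrak T}$ for epistemic subformulas --- in particular whether the epistemic domains, the channel actions, and the collapse element $\overline{\rho_0}$ are all defined as $\mathfrak T$-twins of their canonical counterparts. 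The present paper does not spell that out, so you are right to flag it as the point that must be checked against the source.
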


On this basis one can conclude that:

\begin{itemize}
	\item \emph{Alice} and \emph{Bob} may have different ideas about
	the logical connectives, about truth, falsity and probability.
	\item In spite of these differences, the
	\emph{reasoning-rules} (which are determined by the
	logical consequence relation) are the same for
	\emph{Alice} and for \emph{Bob}. Apparently, assigning the
	same interpretation to the logical connectives is not a
	necessary condition in order to use the same
	reasoning-rules.
\end{itemize}

\begin{acknowledgements}
	We would like to thank  Enrico Beltrametti, Maria Luisa Dalla Chiara and Roberto Giuntini, who have deeply discussed this paper with us, proposing some useful suggestions.
	Sergioli's work has been supported by the Italian Ministry of Scientific Research within the FIRB project ``Structures and dynamics of knowledge and cognition", Cagliari unit F21J12000140001 and by Regione Sardegna within the project ``Modeling the uncertainty: Quantum Theory and Imaging Processing"; Leporini's work has been supported by the Italian Ministry of Scientific Research within the PRIN project ``Automata and Formal Languages: Mathematical Aspects and Applications".
\end{acknowledgements}


\begin{thebibliography}{99}
	\bibitem{BDCGLS} E. Beltrametti, M.L. Dalla Chiara, R. Giuntini, R. Leporini, G. Sergioli,
	``Epistemic Quantum Computational Structures in a Hilbert-Space Environment'',
	\emph{Fundamenta Informaticae} 115, pp. 1-14, 2012. DOI 10.3233/FI-2012-637.
	
	\bibitem{BDGLS2} E. Beltrametti, M.L. Dalla Chiara, R. Giuntini, R. Leporini, G. Sergioli,  ``A Quantum Computational Semantics for Epistemic Logical Operators. Part I: Epistemic Structures'', \emph{International Journal of Theoretical Physics} {\bf 53} (10), pp. 3279--3292, 2014. 

	\bibitem{BDGLS3} E. Beltrametti, M.L. Dalla Chiara, R. Giuntini, R. Leporini, G. Sergioli,  ``A Quantum Computational Semantics for Epistemic Logical Operators. Part II: Semantics'', \emph{International Journal of Theoretical Physics} {\bf 53} (10), pp. 3293--3307, 2014.
	
	\bibitem{DBGS} E. Beltrametti, M.L. Dalla Chiara, R. Giuntini, G. Sergioli,
	``Quantum teleportation and quantum epistemic semantics'',
	\emph{Mathematica Slovacca} {\bf 62} (6), pp. 1-24, 2012.
	
	\bibitem{CDP} G. Chiribella, G. M. D'Ariano, P. Perinotti,
	``Transforming quantum operations: Quantum supermaps'',
	\emph{A Letters Journal Exploring the Frontiers of Physics} {\bf 83}, pp.30004-p1--30004-p6, 2008.
	
	\bibitem{DGL05} M. L. Dalla Chiara, R. Giuntini, R. Leporini,
	``Logics from quantum computation",
	\emph{International Journal of Quantum Information} {\bf 3}, pp. 293--337, 2005.
	
	\bibitem{DCGLS14} M.L. Dalla Chiara, R. Giuntini, R.Leporini, G. Sergioli,
	``A first-order epistemic quantum computational semantics with relativistic-like epistemic effects'',
	submitted.
	
	\bibitem{Hong} Hong-yi Fan, Li-yun Hu,
	``Infinite-dimensional Kraus operators for describing amplitude-damping channel and laser process'',
	\emph{Optics Communications} {\bf 282}, pp. 932--935, 2009.
	
	\bibitem{K} K. Kraus,
	\emph{States, Effects and Operations},
	Springer, Berlin, 1983.
	
	
\end{thebibliography}
\end{document}